\newtheorem{theorem}{Theorem}
\newtheorem{lemma}{Lemma}
\newtheorem{definition}{Definition}
\newtheorem{remark}{Remark}
\newcommand{\trans}{{\mathrm{T}}}
\newcommand{\E}{\mathbb{E}}
\newcommand{\VaR}{\operatorname{V@R}}
\newcommand{\AVaR}{\operatorname{AV@R}}
\DeclareMathOperator*{\esssup}{ess\,sup}
\DeclareMathOperator*{\essinf}{ess\,inf}
\renewcommand{\P}{\mathbb{P}}
\newcommand{\Q}{\mathbb{Q}}
\newcommand{\R}{\mathbb{R}}
\newcommand{\Var}{\operatorname{Var}}
\renewcommand{\phi}{\varphi}
\newcommand{\calF}{\mathcal{F}}
\newcommand{\filF}{\mathbb{F}}
\newcommand{\calP}{\mathcal{P}}
\newcommand{\calS}{\mathcal{S}}
\newcommand{\calQ}{\mathcal{Q}}
\newcommand{\calD}{\mathcal{D}}
\newcommand{\rmd}{\mathrm{d}}
\newcommand{\indic}{\mathbb{I}}
\title{Multiple-prior valuation of cash flows subject to capital requirements}
\author{Hampus Engsner\footnote{hampus.engsner@math.su.se, Stockholm University}, Filip Lindskog\footnote{corresponding author: lindskog@math.su.se, Stockholm University}, Julie Th{\o}gersen\footnote{juliethoegersen@econ.au.dk, Aarhus University}}
\begin{document}
\maketitle
\begin{abstract}
We study market-consistent valuation of liability cash flows motivated by current regulatory frameworks for the insurance industry. Building on the theory on multiple-prior optimal stopping we propose a  valuation functional with sound economic properties that applies to any liability cash flow. 
Whereas a replicable cash flow is assigned the market value of the replicating portfolio, a cash flow that is not fully replicable is assigned a value which is the sum of the market value of a replicating portfolio and a positive margin. 
The margin is a direct consequence of considering a hypothetical transfer of the liability cash flow from an insurance company to an empty corporate entity set up with the sole purpose to manage the liability run-off, subject to repeated capital requirements, and considering the valuation of this entity from the owner's perspective taking model uncertainty into account. 
Aiming for applicability, we consider a detailed insurance application and explain how the optimisation problems over sets of probability measures can be cast as simpler optimisation problems over parameter sets corresponding to parameterised density processes appearing in applications. 
\end{abstract}

\section{Introduction}\label{intro}

We consider the valuation of an aggregate insurance liability cash flow in run-off. The valuation approach is a direct consequence of considering a hypothetical transfer of the liability cash flow from an insurance company to an empty corporate entity set up with the sole purpose to manage the liability run-off. The owner of this entity needs to make sure at any time, in order to continue ownership of the entity, to pay claims and also to provide buffer capital according to the externally imposed  solvency capital requirement (e.g.~by a regulatory framework such as Solvency II). The owner accepts ownership given a suitable initial compensation from the original insurance company wanting to transfer its liabilities. This compensation determine the value of the liability cash flow. However, the owner has the right to, at any time, any surplus exceeding what is required to manage the liability run-off and meet solvency capital requirements. Therefore, the amount of compensation that the owner finds acceptable depends of the owner's view of such surplus.  

The setting and the valuation approach we consider are similar to those considered in Engsner et al.~\cite{Engsner-Lindensjo-Lindskog-20}. An essential difference here is that we acknowledge that an agent who assigns a value to possible future dividends and capital injections from managing a run-off of a liability may consider a valuation functional depending on a set of pricing measures, in the incomplete market setting, rather than a single one. The agent is uncertain about which pricing measure to use and may change view depending on new information. Although this appears to be a modest difference it on the one hand lead to significant mathematical challenges and on the other hand it means that the conservative valuation functional, corresponding to expected discounted values according to the worst pricing measure, can be applied to a wide range of liability cash flows rather than having to be chosen in order to match a specific type of liability cash flow. In order to make this statement clear we may think of cash flows from life insurance. If the agent benefits from survival of policyholders, then a conservative valuation from the agents perspective corresponds to choosing a pricing measure $\Q$ that assigns higher probability to the occurrences of deaths compared to $\P$. However, if the agent instead benefits from deaths of policyholders, then such a $\Q$ no longer corresponds to conservative valuation.  

Insurance liability cash flows may be partly defined in terms of financial asset prices, specific interest rates or inflations indices. For liability cash flows where this is not the case, the cash flows may show significant correlation with market prices. Therefore, any insurance liability valuation methodology must be such not to introduce arbitrage opportunities and must consider replicating portfolios that hedge the financial component of a liability cash flow, whenever that is relevant. Consequently, there is a significant literature on market-consistent insurance valuation covering single-period, multiple-period and continuous-time valuation problems with varying assumptions on the financial market forming the basis for designing replicating portfolios of varying degrees of sophistication. 
We refer (in chronological order) to 
Grosen and J{\o}rgensen \cite{Grosen-Jorgensen-02},  
Malamud et al.~\cite{Malamud-Trubowitz-Wuthrich-08}, 
W\"uthrich et al.~\cite{Wuthrich-Embrechts-Tsanakas-11}, 
M\"ohr \cite{Moehr-11}, 
Tsanakas et al.~\cite{Tsanakas-Wuthrich-Cerny-13}, 
W\"uthrich and Merz \cite{Wuthrich-Merz-13}, 
Pelsser and Stadje \cite{Pelsser-Stadje-14},    
Engsner et al.~\cite{Engsner-Lindholm-Lindskog-17}, 
Delong et al.~\cite{Delong-Dhaene-Barigou-19}, 
Barigou and Dhaene \cite{Barigou-Dhaene-19},  
Engsner et al.~\cite{Engsner-Lindensjo-Lindskog-20}, 
and references therein. 

A common theme in the literature on market-consistent insurance valuation is that the value assigned to a liability cash flow can be expressed as the sum of a market price of a replicating portfolio and a value assigned to the replication error (notice that a substantial replication error is a common feature of insurance liabilities). The liability values in this paper are also of this kind. 
Rebalancing times of a dynamic replicating portfolio means that the replication error has to be reassessed over time and taking this into consideration leads to the notion of time-consistent valuation. Similarly, repeated capital requirements lead to capital costs that are not known at the initial valuation time and taking such costs into account appropriately also require time-consistent valuation. 
Time consistency is a key concept in the literature on dynamic risk measurement. 
We refer (in chronological order) to  
Riedel \cite{Riedel-04}, 
Detlefsen and Scandolo \cite{Detlefsen-Scandolo-05}, 
Rosazza Gianin \cite{RosazzaGianin-06}, 
Cheridito et al.~\cite{Cheridito-Delbaen-Kupper-06},  
Artzner et al.~\cite{Artzner-Delbaen-Eber-Heath-Ku-07}, 
Bion-Nadal \cite{Bion-Nadal-08}, 
Cheridito and Kupper \cite{Cheridito-Kupper-09}, 
Cheridito and Kupper \cite{Cheridito-Kupper-11}, 
and references therein. 

In Artzner et al.~\cite{Artzner-Eisele-Schmidt-20} and Deelstra et al.~\cite{Deelstra-et-al-20} it is argued that diversifiable insurance risk should only be assigned a value corresponding to the $\P$-expectation of such risk since the law of large numbers applies if the insurance company may form arbitrarily large portfolios. In our setting this argument is not valid since the corporate entity to which the insurance company's aggregate liability is transferred is a separate entity (referred to as reference undertaking in Solvency II) that may not be merged with other corporate entities. In that sense the entity to which the liabilities are transferred may be seen as a special purpose vehicle. Although this entity benefits from diversification when capital requirements are computed, it can not diversify the liability further.   

Optimal stopping with multiple priors for agents assessing risk in terms of dynamic convex risk measures is analyzed in Cheridito et al.~\cite{Cheridito-Delbaen-Kupper-06}. Similar problems are analyzed in Engelage \cite{Engelage-11}, where the framework of optimal stopping with multiple priors in \cite{Riedel-09} is extended to so-called dynamic variational preferences. From an applied perspective: whereas all priors/probability measures in a given set of priors are treated as equally likely in the framework in Riedel \cite{Riedel-09}, introducing (dynamic) penalty terms as in Cheridito et al.~\cite{Cheridito-Delbaen-Kupper-06} and Engelage \cite{Engelage-11} means that the optimizing agent may assign different (dynamic) weights to the priors in the optimization problem. Optimal stopping is a key element in our approach to valuation since the owner of the entity managing the run-off of the liability, just as shareholders in general, has limited liability. At any time, taking the value of assets and future liability cash flows into account, if a capital injection is needed to meet capital requirements, the owner may choose between making a capital injection of not. Without such a capital injection, ownership is terminated and the the remaining assets are transferred to policyholders. Therefore, the rational owner determines optimal stopping times.    

Although we assume that the replicating portfolio is chosen to ensure that the valuation of the liability cash flow is market consistent, we do not discuss market consistency in detail since this was treated in detail in Engsner et al.~\cite{Engsner-Lindensjo-Lindskog-20} and the material on market consistency in Engsner et al.~\cite{Engsner-Lindensjo-Lindskog-20} applies without any modification also in the present paper. However, we emphasise that we advocate choosing a replicating portfolio in agreement with what recommended by EIOPA in \cite[Article 38]{Commission-del-reg-15} in Article 38(h) on the Reference Undertaking: "the assets are selected in such a way that they minimise the Solvency Capital Requirement for market risk that the reference undertaking is exposed to". The reference undertaking in Solvency II is similar in spirit to the corporate entity managing the run-off of the liability in the present paper. 

The paper is organised as follows. 
Section \ref{sec:framework} presents the valuation framework. Basic assumptions, notation and terminology are introduced in Subsection \ref{sec:prel}. Subsection \ref{sec:gcr} introduces the agents involved and explains how capital requirements and limited liability are key ingredients in the valuation philosophy that originates from the idea of a hypothetical transfer of an insurance company's liabilities to an empty corporate entity. 
Definitions and results are presented in Subsection \ref{sec:gcr} for very general capital requirements.  
Subsection \ref{sec:cmrmcr} then specialises by considering capital requirements given in terms of conditional monetary risk measures, in line with current regulatory frameworks. 
Section \ref{sec:Q} presents a general construction of a parametric set of priors that cover natural choices for applications and shows that the set of priors satisfies the properties making it suitable for optimal stopping with multiple priors. 
Section \ref{sec:gaussian_example} considers a specific insurance application that illustrates the use of the valuation framework and the results presented.  

\section{The valuation framework}\label{sec:framework}

\subsection{Preliminaries}\label{sec:prel}

We consider time periods $1,\dots,T$, corresponding time points $0,1,\dots,T$, and a filtered probability space $(\Omega,\calF,\filF,\P)$, where $\filF=(\calF_t)_{t=0}^{T}$ with $\{\emptyset, \Omega\}=\calF_0\subseteq \dots \subseteq \calF_{T}=\calF$, and $\P$ denotes the real-world measure. 
For $p\in [1,\infty)$, we write $L^p(\calF_t,\P)$ for the normed linear space of $\calF_t$-measurable random variables $X$ with norm $\E^{\P}[|X|^p]^{1/p}$. We write $L^{\infty}(\calF_t,\P)$ for the normed linear space of $\calF_t$-measurable essentially bounded random variables. 
Equalities and inequalities between random variables should be interpreted in the $\P$-almost sure sense. 
A stopping time is a function $\tau:\Omega\to\{0,1,\dots,T\}\cup \{+\infty\}$ such that $\{\tau=t\}\in\calF_t$ for $t=0,1,\dots,T$. 

For two probability measures $\Q^{(1)},\Q^{(2)}$ equivalent to $\P$ and a stopping time $\tau\leq T$, the probability measure 
$\Q^{(3)}(A):=\E^{\Q^{(1)}}[\Q^{(2)}(A\mid\calF_{\tau})]$, $A\in\calF_T$, is called the pasting of $\Q^{(1)}$ and $\Q^{(2)}$ in $\tau$. 
It is often convenient to express the pasting $\Q^{(3)}$ of $\Q^{(1)},\Q^{(2)}$ in $\tau$ in terms of the density processes $D^{(1)},D^{(2)}$ with respect to $\P$,
\begin{align*}
D^{(1)}_t=\frac{d\Q^{(1)}}{d\P}\Big|_{\calF_t}, \quad D^{(2)}_t=\frac{d\Q^{(2)}}{d\P}\Big|_{\calF_t}.
\end{align*}
The density process $D^{(3)}$ given by 
\begin{align*}
D^{(3)}_t=\indic\{t\leq \tau\}D^{(1)}_t+\indic\{t> \tau\}\frac{D^{(1)}_{\tau}D^{(2)}_t}{D^{(2)}_{\tau}}
\end{align*}
corresponds to the pasting $\Q^{(3)}$ of $\Q^{(1)},\Q^{(2)}$ in $\tau$. Equivalently, we can write
\begin{equation*}
D^{(3)}_t=\prod_{s=1}^t\bigg(\indic\{s\leq \tau\}\frac{D^{(1)}_s}{D^{(1)}_{s-1}}+\indic\{s> \tau\}\frac{D^{(2)}_s}{D^{(2)}_{s-1}}\bigg).
\end{equation*}
A set $\mathcal{Q}$ of probability measures equivalent to $\P$ is called stable under pasting if for any $\Q^{(1)},\Q^{(2)}\in\mathcal{Q}$ and any stopping time $\tau\leq T$, the pasting $\Q^{(3)}$ of $\Q^{(1)},\Q^{(2)}$ in $\tau$ is an element in $\mathcal{Q}$. We call such a set stable under pasting. Such sets are also referred to as m-stable, time consistent or rectangular in the related literature. 

We assume the existence of a financial market containing assets for which $\filF$-adapted price processes $(S^0_t)_{t=0}^T$ and $(S^i_t)_{t=0}^T$, $i=1,\dots,d$, are available. $(S^0_t)_{t=0}^T$ is the price process of a (predictable) locally riskless bond. The price processes correspond to traded assets for which reliable price quotes are available.  
We take the price process of the locally riskless bond as num\'eraire process and in what follows all financial values are discounted by this num\'eraire. This saves us from having to explicitly take interest rates processes into account and makes the mathematical expressions less involved.  
We will also allow for $\filF$-adapted cash flows that depend on insurance events independent of the filtration generated by the traded assets. In particular, we consider an incomplete market setting. 
We assume that the set $\mathcal{P}$ of equivalent martingale measures (for each $\Q\in\mathcal{P}$, $\Q$ is equivalent to $\P$ and the $(S^0_t)_{t=0}^T$-discounted price processes are $\Q$-martingales) is non-empty. By Proposition 6.43 in F\"ollmer and Schied \cite{Foellmer-Schied-16} the set $\mathcal{P}$ is stable under pasting. We will consider a non-empty subset $\mathcal{Q}\subset \mathcal{P}$. We refer to $\Q\in\calQ$ as a market risk neutral probability measure. 
We use the conventions $\sum_{l=k}^{k-1}:=0$ and $\inf\emptyset:=+\infty$ for sums over an empty index set and the infimum of an empty set. We use the notation $(x)^+:=\max(0,x)$.

\subsection{Valuation with general capital requirements}\label{sec:gcr}

We consider an insurance company with an aggregate insurance liability corresponding to a liability cash flow given by the $\filF$-adapted stochastic process $X^o=(X^o_t)_{t=1}^{T}$. Regulation forces the insurance company to comply with externally imposed capital requirements. The requirements put restrictions on the asset portfolio of the insurance company. 
A subset of the assets forms a replicating portfolio with $\filF$-adapted cash flow $X^r=(X^r_t)_{t=1}^T$ intended to, to some extent, offset the liability cash flow. Depending on the degree of replicability of the liability cash flow, the replicating portfolio could be anything from simply a position in the num\'eraire asset to a portfolio that is rebalanced dynamically according to some strategy. $X:=X^{o}-X^{r}$ is the residual liability cash flow.  
We will, in accordance with current solvency regulation 
(M\"ohr \cite{Moehr-11} and prescribed by EIOPA, see \cite[Article 38]{Commission-del-reg-15})
define the value of the liability cash flow $X^o$ by considering a hypothetical transfer of the liability and the replicating portfolio to a separate entity referred to as a reference undertaking. The reference undertaking has initially neither assets nor liabilities and its sole purpose is to manage the run-off of the liability. 
The benefit of ownership is the right to receive certain dividends/surplus, defined below, until either the run-off of the liability cash flow is complete or until letting the reference undertaking default on its obligations to the policyholders. The term default means termination of ownership of the reference undertaking.
The precise details are as follows,

\begin{itemize}
\item 
At time $t=0$: The liabilities corresponding to the cash flow $X^o$, the replicating portfolio corresponding to the cash flow $X^r$ and an amount $R_0$ in the num\'eraire are supposed to be transferred from the insurance company to the reference undertaking, where $R_0$ is the amount making the reference undertaking precisely meet the externally imposed capital requirement. In return, an agent aspiring ownership of the reference undertaking must first pay the original insurance company an amount $C_0$ corresponding to the value of receiving future dividends resulting from managing the run-off of the liability. In case there are several agents aspiring ownership, the one offering the highest amount $C_0$ wins the ownership. 

\emph{In summary: the new owner of the reference undertaking receives compensation $V_0:=R_0-C_0$ from the original insurance company as compensation for accepting to receive the liabilities and replicating portfolio and agreeing to manage the liability run-off.}  
\item 
At time $t=0$: By paying the amount $C_0$ to the original insurance company, the owner receives full ownership of the reference undertaking. However, the cash-flow $X^r$ of the replicating portfolio (possibly defined in terms of a dynamic strategy) cannot be modified by the owner, for instance in order to boost dividend payments in a way that may not be in the interest of policyholders.    
\item 
At time $t=1$: The owner has the option to either default on its obligations to the policyholders or not to default. 

The decision to default means to give up ownership and transfer $R_0$ and the replicating portfolio to the policyholders. The owner neither receives any dividend payment nor incurs any loss upon a decision to default. 

If $T>1$ and given the decision not to default, a new amount $R_1$ in the num\'eraire asset is needed to make the reference undertaking precisely meet the externally imposed capital requirement. If $R_0-R_1-X^{o}_1+X^{r}_1\geq 0$, then the positive surplus $R_0-R_1-X^{o}_1+X^{r}_1\geq 0$ is paid to the owner and $X^{o}_1$, which the policyholders are entitled to, is paid to the policyholders. If $R_0-R_1-X^{o}_1+X^{r}_1<0$, then the owner faces a deficit that must be offset by injecting $-R_0+R_1+X^{o}_1-X^{r}_1>0$. Also in this case $X^{o}_1$ is paid to the policyholders. 

If $T=1$, then the above description of cash flows to policyholders and owner applies upon setting $R_1=0$.

\item At time $t\in \{2,\dots,T\}$: If the owner has not defaulted on its obligations, then the situation is completely analogous to that at time $t=1$ described above.
\end{itemize}

From the above follows that the owner of the reference undertaking has to decide on a decision rule defining under which circumstances default occurs. The default time is a stopping time $\tau \in \calS_{1,T+1}$, where $\calS_{t,T+1}$ denotes the set of $\filF$ stopping times taking values in $\{t,\dots,T+1\}$. The event $\{\tau=T+1\}$ is to be interpreted as a complete liability run-off without default at any time. Formally,
$\calS_{t,T+1}:=\{\tau : \tau \text{ is a stopping time with } \tau\geq t\} \wedge (T+1)$. 

The cumulative cash flow to the owner can be written as
\begin{align}
\sum_{t=1}^{\tau-1}(R_{t-1} - R_t -X_t), \quad X_t:=X^{o}_t-X^{r}_t. \label{C-cashflow}
\end{align}
For ease of notation, define the payoff process $(H_t)_{t=1}^T$ by 
\begin{align}
H_1:=0, \quad H_t:=\sum_{s=1}^{t-1}(R_{s-1} - R_s -X_s) \quad\text{for } t>1.\label{H-process}
\end{align}
Note that this payoff process is predictable. The conservative value of the cash flow \eqref{C-cashflow} is 
\begin{align}
\inf_{\Q \in \calQ}\E_0^{\Q}[H_\tau]. \label{C-value-tau}
\end{align}
We assume that the owner of the reference undertaking chooses a default time $\tau$ maximizing the value \eqref{C-value-tau}. Consequently, the value at time $0$ of the reference undertaking is 
\begin{align}\label{C0-expression}
\sup_{\tau \in \mathcal{S}_{1,T+1}}\inf_{\Q \in \calQ}\E_0^{\Q}[H_\tau].
\end{align}
For $t\in\{1,\dots,T\}$, the value of the reference undertaking at time $t$, given no default at times $\leq t$, is given by the completely analogous expression upon replacing $\sup$ and $\inf$ in \eqref{C0-expression} by the essential supremum $\esssup$ and essential infimum $\essinf$ (see Appendix A.5 in F\"ollmer and Schied \cite{Foellmer-Schied-16} for details) and conditioning on $\calF_t$ rather than $\calF_0$. 
Notice that since no cash flows occur at times $>T$, the value of the reference undertaking is zero at time $T$.
The value of the reference undertaking can thus be identified as the value of an American type derivative. Details on arbitrage-free pricing of American derivatives can be found in Section 6.3 in \cite{Foellmer-Schied-16}. 

Since we are considering sets $\calQ$ of probability measures we need the cash flows to be suitably integrable with respect to all $\Q\in\calQ$. The following notion of uniform integrability, from Riedel \cite{Riedel-09}, will be used.  
The process $(H_t)_{t=1}^T$ in \eqref{H-process} is bounded by a $\calQ$-uniformly integrable random variable in the sense that there exists $Z\geq 0$ such that  
\begin{align}\label{eq:H-uniformly-bounded}
\sup_{t\in\{1,\dots,T\}} |H_t|\leq Z
\text{ and }
\lim_{K\to\infty}\sup_{\Q\in\calQ}\E^{\Q}[Z\indic_{\{Z\geq K\}}]=0.
\end{align}   

We now define the value of the reference undertaking, corresponding to what an external party would pay to become owner of the entity managing the run-off of the liability, and also the value of the residual liability. The sum of the latter and the market price of the replicating portfolio is the value of the original liability to policyholders and therefore is a theoretical aggregate premium.  

\begin{definition}\label{Ct-definition}
Let $\calQ$ be a set of market risk neutral probability measures.
Consider sequences $(X_t)_{t=1}^T$ and $(R_t)_{t=0}^T$ with $X_t\in L^{1}(\calF_t,\Q)$ for $t\in \{1,\dots,T\}$ for every $\Q\in\calQ$, $R_T=0$ and $R_t\in L^{1}(\calF_t,\Q)$ for $t\in \{0,\dots,T-1\}$ for every $\Q\in\calQ$. Define 
$C_T :=0$ and, for $t\in \{0,\dots,T-1\}$, 
\begin{align}
C_t := \esssup_{\tau \in \mathcal{S}_{t+1,T+1}}\essinf_{\Q \in \calQ}\E_t^{\Q}\bigg[\sum_{s=t+1}^{\tau-1}(R_{s-1}-R_s-X_s) \bigg].  \label{Ct-expression} 
\end{align}
$C_t$ is \emph{the value of the reference undertaking at time $t$} given no default at times $\leq t$. 
$V_t:=R_t-C_t$ is \emph{the value of the residual liability at time $t$} given no default at times $\leq t$. 
\end{definition}

Notice that
\begin{align*}
V_t&:=R_t-C_t\\ 
&= R_t-\esssup_{\tau\in\mathcal{S}_{t+1,T+1}}\essinf_{\Q \in \calQ}\E_t^{\Q}\bigg[\sum_{s=t+1}^{\tau-1}(R_{s-1}-R_s-X_s)\bigg] \\
&=\essinf_{\tau\in\mathcal{S}_{t+1,T+1}}\esssup_{\Q \in \calQ}\E_t^{\Q}\bigg[R_t - \sum_{s=t+1}^{\tau-1}(R_{s-1}-R_s-X_s)\bigg] \\
&=\essinf_{\tau \in \mathcal{S}_{t+1,T+1}}\esssup_{\Q \in \calQ}\E_t^{\Q}\bigg[\sum_{s=t+1}^{\tau-1}X_s+R_{\tau-1}\bigg] \\
&\leq \esssup_{\Q \in \calQ}\E_t^{\Q}\bigg[\sum_{s=t+1}^{T}X_s\bigg]=:\overline{V}_t. 
\end{align*}
The general upper bound 
\begin{align}\label{eq:V0generalupperbound}
\overline{V}_0:=\sup_{\Q \in \calQ}\E_0^{\Q}\bigg[\sum_{s=1}^{T}X_s\bigg]\geq V_0
\end{align}
does neither depend on the filtration nor on the capital requirements, and is typically much easier to compute than $V_0$. Therefore, this upper bound provides a useful conservative estimate of $V_0$. This statement is illustrated in the numerical example in Section \ref{sec:gaussian_example}. 
Notice that in general
\begin{align}
V_t&=\essinf_{\tau \in \mathcal{S}_{t+1,T+1}}\esssup_{\Q \in \calQ}\E_t^{\Q}\bigg[\sum_{s=t+1}^{\tau-1}X_s+R_{\tau-1}\bigg] \nonumber \\
&\geq \esssup_{\Q \in \calQ}\essinf_{\tau \in \mathcal{S}_{t+1,T+1}}\E_t^{\Q}\bigg[\sum_{s=t+1}^{\tau-1}X_s+R_{\tau-1}\bigg] 
=: \underline{V}_t. \label{eq:Vtgenerallowerbound}
\end{align}
In particular, the general lower bound 
\begin{align}\label{eq:V0generallowerbound}
\underline{V}_0:=\sup_{\Q \in \calQ}\inf_{\tau \in \mathcal{S}_{1,T+1}}\E_0^{\Q}\bigg[\sum_{s=1}^{\tau-1}X_s+R_{\tau-1}\bigg]
\leq V_0
\end{align}
may be attractive since it is typically easier to compute than $V_0$, see Section \ref{sec:gaussian_example} for an illustration.  
Computing $\underline{V}_0$ means solving a standard optimal stopping problem for each $\Q\in\calQ$ followed by finding the maximum of the obtained values $V_0^{\Q}$. 

Notice that the value $L_0$ of the original liability cash flow $X^{o}$ follows directly from the procedure for transferring the liabilities and replicating portfolio to an external party (the new owner of the reference undertaking) accepting the transfer: $L_0$ equals the sum of the market value of the replicating portfolio and the value $V_0$ of the residual liability:
$$
L_0=\E_0^{\Q}\bigg[\sum_{s=1}^{T}X^{r}_s\bigg]+V_0,
$$ 
where $\Q$ is any market risk neutral probability measure making the expectation equal the market value of the replicating portfolio. For details on the market consistency of the value $L_0$ we refer to the material on market consistency in Engsner et al.~\cite{Engsner-Lindensjo-Lindskog-20}. 

We intend to build on the theory of multiple prior optimal stopping in Riedel \cite{Riedel-09} where four assumptions on a set $\calQ$ of probability measures are imposed in order for key results to hold. These assumptions are $\calQ$-uniform integrability together with properties (i)-(iii) of the following definition. 

\begin{definition}\label{assumption}
A set $\calQ$ of probability measures is suitable for multiple prior optimal stopping if the following properties hold. 
(i) Each $\Q\in\calQ$ is equivalent to $\P$; 
(ii) $\calQ$ is stable under pasting; 
(iii) For each $t\in \{0,\dots,T\}$, 
$$
\bigg\{\frac{d\Q}{d\P}\bigg|_{\calF_t} : \Q\in\mathcal{Q}\bigg\}
$$
is weakly compact in $L^1(\calF_T,\P)$.
\end{definition}

\begin{remark}
If $\calQ$ satisfies the properties (i)-(iii) in Definition \ref{assumption}, then it follows from Theorem 2 in Riedel \cite{Riedel-09} that the lower bound $\underline{V}_0$ in \eqref{eq:V0generallowerbound} equals $V_0$. This holds since for such $\calQ$ the inequality in \eqref{eq:Vtgenerallowerbound} is in fact a minimax identity. 
Notice also that for an arbitrary $\Q$ equivalent to $\P$, $\{\Q\}$ satisfies properties (i)-(iii) in Definition \ref{assumption}. 
\end{remark}

As a basis for applying the theory to be presented, we will later in Section \ref{sec:Q} explicitly construct a useful set $\calQ$ satisfying the properties in Definition \ref{assumption} and present a detailed numerical example in Section \ref{sec:gaussian_example}. 

We are now ready to state a key result which shows that $(C_t,V_t)$ defined in terms of a multiple prior optimal stopping problem may equivalently be defined as the solution to a backward recursion.    

\begin{theorem}\label{Vt-def-thm} 
Let $\calQ$ be a set of probability measures satisfying properties (i)-(iii) of Definition \ref{assumption}. 
Consider sequences $(X^{o}_t)_{t=1}^T$, $(X^{r}_t)_{t=1}^T$, $(R_t)_{t=0}^T$ with $X^{o}_t,X^{r}_t\in L^{1}(\calF_t,\Q)$ for $t\in \{1,\dots,T\}$ for every $\Q\in\mathcal{Q}$, $R_T=0$ and $R_t\in L^{1}(\calF_t,\Q)$ for $t\in \{0,\dots,T-1\}$ for every $\Q\in\calQ$. Set $X_t:=X^{o}_t-X^{r}_t$ and assume that $(H_t)_{t=1}^T$ in \eqref{H-process} is bounded by a $\calQ$-uniformly integrable random variable. Then  

(i) 
If the sequences $(C_t)_{t=0}^T$ and $(V_t)_{t=0}^T$ are given by Definition \ref{Ct-definition}, then 
\begin{align} 
C_t &= \essinf_{\Q \in \calQ} \E_t^{\Q}[(R_t-X_{t+1}-V_{t+1})^+], \quad C_T=0, \label{Ct-expression2}\\
V_t &= \esssup_{\Q \in \calQ} \Big(R_t-\E_t^{\Q}[(R_t-X_{t+1}-V_{t+1})^+]\Big), \quad V_T=0. \label{Vt-expression2}
\end{align}

(ii) 
The stopping times $(\tau^*_t)_{t=0}^{T-1}$ given by 
\begin{align*} 
\tau^*_{t}=  \inf \{s \in \{t+1,\dots,T\} : R_{s-1}-X_{s}-V_{s}<0\} \wedge (T+1)
\end{align*}
are optimal in \eqref{Ct-expression}. 

(iii) 
If the sequences $(C_t)_{t=0}^T$ and $(V_t)_{t=0}^T$ are given by \eqref{Ct-expression2} and \eqref{Vt-expression2}, then, for $t\in\{0,\dots,T-1\}$, $C_t$ is given by \eqref{Ct-expression} and $V_t=R_t-C_t$. 
\end{theorem}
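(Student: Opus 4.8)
The plan is to prove the three parts in the order (ii)$\,\Rightarrow\,$(i)$\,\Rightarrow\,$(iii), reducing everything to the multiple-prior optimal-stopping machinery of Riedel~\cite{Riedel-09}. Recall that by the computation following Definition~\ref{Ct-definition} we have the equivalent representation
\begin{align*}
V_t=\essinf_{\tau\in\mathcal{S}_{t+1,T+1}}\esssup_{\Q\in\calQ}\E_t^{\Q}\Big[\sum_{s=t+1}^{\tau-1}X_s+R_{\tau-1}\Big],
\end{align*}
and since $\calQ$ satisfies (i)--(iii) of Definition~\ref{assumption} and $(H_t)$ is bounded by a $\calQ$-uniformly integrable random variable, Theorem~2 in~\cite{Riedel-09} applies to the payoff process $H_\tau$: the minimax identity holds, $C_t=\esssup_{\tau\in\mathcal{S}_{t+1,T+1}}\essinf_{\Q\in\calQ}\E_t^{\Q}[H_\tau]$ is the multiple-prior Snell envelope, and it satisfies the dynamic programming recursion together with an explicit optimal stopping rule. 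Concretely, writing $G_t:=H_t=\sum_{s=1}^{t-1}(R_{s-1}-R_s-X_s)$ for the (predictable) reward, the multiple-prior Snell envelope $U_t$ of $G$ over $\mathcal{S}_{t,T+1}$ satisfies $U_{T+1}=G_{T+1}$ and $U_t=\max\{G_t,\ \essinf_{\Q\in\calQ}\E_t^{\Q}[U_{t+1}]\}$, and the first time $G$ meets the envelope is optimal. My $C_t$ is then $U$ evaluated one step ahead, i.e.\ $C_t=\essinf_{\Q\in\calQ}\E_t^{\Q}[U_{t+1}]$ with the reward process re-indexed so that stopping is allowed from $t+1$ on.

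For part~(ii): translate the generic ``stop as soon as reward meets the Snell envelope'' rule into the stated rule. Since $G_{s+1}-G_s=R_{s-1}-R_s-X_s$ is $\calF_s$-measurable and $U_{s+1}=\max\{G_{s+1},C_{s+1}\}$, one checks that the increment $G_{s+1}-G_s$ compared against the continuation value translates, after the substitution $C_s=R_{s-1}-\cdots$, into the sign condition $R_{s-1}-X_s-V_s<0$; here one uses $V_s=R_s-C_s$ as the defining relation and the identity $(R_s-X_{s+1}-V_{s+1})^{+}=V_{s+1}-\min(R_s-X_{s+1},V_{s+1})$ together with $C_s+V_s=R_s$. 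So $\tau_t^{*}$ is exactly the entrance time of the reward into the stopped region, hence optimal in~\eqref{Ct-expression} by Riedel's theorem.

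For part~(i): combine the dynamic programming equation for $U$ with the definition $V_t=R_t-C_t$. From $U_{t+1}=\max\{G_{t+1},C_{t+1}\}$ and $C_t=\essinf_{\Q\in\calQ}\E_t^{\Q}[U_{t+1}]$, substitute $G_{t+1}=G_t+(R_{t-1}-R_t-X_t)$... more cleanly, shift the time index so that $C_t=\essinf_{\Q\in\calQ}\E_t^{\Q}[\,(\text{reward at }t+1)\vee C_{t+1}\,]$, where ``reward at $t+1$'' measured from the vantage point of time $t$ equals $R_t-R_t=0$ plus the running term; carrying out this bookkeeping and using $a\vee b - a = (b-a)^{+}$ with $a=R_t-X_{t+1}-C_{t+1}=R_t-X_{t+1}-(R_{t+1}-V_{t+1})$ collapses to $C_t=\essinf_{\Q\in\calQ}\E_t^{\Q}[(R_t-X_{t+1}-V_{t+1})^{+}]$, which is~\eqref{Ct-expression2}; then~\eqref{Vt-expression2} follows by $V_t=R_t-C_t$ and pulling $R_t$ (which is $\calF_t$-measurable) out of the essential infimum, turning $\essinf$ into $\esssup$ of the negative. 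The integrability hypotheses on $X_t,R_t$ and the $\calQ$-uniform boundedness of $(H_t)$ are what make every conditional expectation and every interchange well defined. Part~(iii) is then the converse bookkeeping: given sequences defined by the recursion~\eqref{Ct-expression2}--\eqref{Vt-expression2}, run the argument of~(i) backwards — uniqueness of the solution to the backward recursion (immediate by downward induction from $C_T=0$) forces it to coincide with the Snell-envelope sequence of Definition~\ref{Ct-definition}, and $V_t=R_t-C_t$ holds by construction.

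The main obstacle I anticipate is purely notational rather than conceptual: getting the time-indexing of the reward process $H$ versus the stopping set $\mathcal{S}_{t+1,T+1}$ exactly right so that Riedel's Snell-envelope recursion, which is stated for a reward one may collect \emph{at} the current time, lines up with our convention that at time $t$ one may only default at times $\geq t+1$ and the payoff $H_\tau$ is predictable. A secondary point needing care is verifying that Riedel's standing assumptions are met in our conditional (time-$t$) formulation, i.e.\ that passing from the time-$0$ statement~\eqref{C0-expression} to the time-$t$ essential-supremum/infimum version is legitimate — this is exactly what the stability under pasting in Definition~\ref{assumption}(ii) and the weak compactness in (iii) are there to guarantee, so it amounts to citing the appropriate parts of~\cite{Riedel-09} and \cite[Appendix A.5]{Foellmer-Schied-16} rather than reproving them.
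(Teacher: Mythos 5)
Your proposal follows essentially the same route as the paper's proof: invoke Riedel's multiple-prior Snell envelope for the predictable payoff process $H$, identify $C_t$ with the one-step-shifted envelope $\essinf_{\Q\in\calQ}\E_t^{\Q}[U^{\calQ}_{t+1}]-H_{t+1}$, and collapse the dynamic-programming recursion to $\essinf_{\Q\in\calQ}\E_t^{\Q}[(R_t-X_{t+1}-V_{t+1})^+]$ by the same bookkeeping with $V_{t+1}=R_{t+1}-C_{t+1}$. One detail to tidy in part (ii): the direct translation of Riedel's rule $\inf\{s:U^{\calQ}_s=H_s\}$ yields the condition $R_{s-1}-X_s-V_s\leq 0$ rather than $<0$, so the stated stopping time requires the additional (standard) observation that continuing on the indifference set $\{R_{s-1}-X_s-V_s=0\}$ is also optimal.
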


\begin{remark}
Stability under pasting of $\calQ$ is a necessary requirement in Theorem \ref{Vt-def-thm}. However, we show later in Theorem \ref{thm:general_prior_example} that instead of the weak compactness property (iii) in Definition \ref{assumption}, which is assumed in Theorem \ref{Vt-def-thm}, it is sufficient to verify weak relative compactness together with some natural additional properties. Notice that a bounded and uniformly integrable subset of $L^1(\calF_t,\P)$ is weakly relatively compact in $L^1(\calF_t,\P)$ (Theorem A.70 in \cite{Foellmer-Schied-16}). Without weak compactness we can however not guarantee that there exists a $\Q^*\in\calQ$ which solves the optimization problems \eqref{Ct-expression2} and \eqref{Vt-expression2}. 
\end{remark}

\begin{proof}[Proof of Theorem \ref{Vt-def-thm}]
We will first consider the problem
\begin{align}
\esssup_{\tau \in \calS_{t,T+1}}\essinf_{\Q \in \calQ}\E_t^\Q[H_\tau]. \label{eq:modified_stopping_problem}
\end{align}
We define the multiple prior Snell envelope of $H$ with respect to $\calQ$ as in \cite{Riedel-09} by 
\begin{align}\label{eq:multi-prior-snell1}
U^\calQ_{T+1} := H_{T+1},\quad
U^\calQ_t :=\max\big\{H_t, \essinf_{\Q \in \calQ}\E_t^\Q[U^\calQ_{t+1}]\big\} \quad \text{for } t\leq T.
\end{align}
We know from Theorem 1 in \cite{Riedel-09} that 
\begin{align}\label{eq:multi-prior-snell2}
U^\calQ_{t} = \esssup_{\tau \in \calS_{t,T+1}}\essinf_{\Q \in \calQ}\E_t^\Q[H_\tau]
\end{align}
and that $\tau^*_t := \inf\{s \geq t  : U^{\calQ}_s = H_s\}$ is an optimal stopping time that solves \eqref{eq:modified_stopping_problem}. Define $\widetilde{U}^{\calQ}$ by 
\begin{align*}
\widetilde{U}^{\calQ}_t := \esssup_{\tau \in \calS_{t+1,T+1}}\essinf_{\Q \in \calQ}\E_t^\Q[H_\tau].
\end{align*}
We claim that the relation $\widetilde{U}^{\calQ}_t = \essinf_{\Q \in \calQ}\E_t^\Q[U^\calQ_{t+1}]$ holds.  
Indeed, from \eqref{eq:multi-prior-snell2},
\begin{align*}
U^\calQ_{t} = \max\big\{H_{t},\esssup_{\tau \in \calS_{t+1,T+1}}\essinf_{\Q \in \calQ}\E_t^\Q[H_\tau]\big\}
= \max\big\{H_{t},\widetilde{U}^{\calQ}_t\big\}.
\end{align*}
Therefore, from \eqref{eq:multi-prior-snell1}, we have the relation
\begin{align*}
\max\big\{H_t, \essinf_{\Q \in \calQ}\E_t^\Q[U^\calQ_{t+1}]\big\}=\max\big\{H_{t},\widetilde{U}^{\calQ}_t\big\}.
\end{align*} 
Since this holds for arbitrary adapted $H$, the claim is proved and gives
\begin{align*}
C_t &= \widetilde{U}^{\calQ}_t -H_{t+1} =\essinf_{\Q \in \calQ}\E_t^\Q[U^\calQ_{t+1}] -H_{t+1} \\
&= \essinf_{\Q \in \calQ}\E_t^\Q[ \max\{H_{t+1},\essinf_{\Q \in \calQ}\E_{t+1}^\Q[U^\calQ_{t+2}]\}-H_{t+1}]\\
&= \essinf_{\Q \in \calQ}\E_t^\Q[ \max\{0,\essinf_{\Q \in \calQ}\E_{t+1}^\Q[U^\calQ_{t+2}]-H_{t+1}\}]\\
&= \essinf_{\Q \in \calQ}\E_t^\Q[ \max\{0,C_{t+1} + H_{t+2} - H_{t+1}\}]\\
&= \essinf_{\Q \in \calQ}\E_t^\Q[ (R_t - X_{t+1}- V_{t+1})^+].
\end{align*}
Hence, we have shown \eqref{Ct-expression2} from which \eqref{Vt-expression2} is an immediate consequence. This concludes the proof of statement (i). 
\end{proof}

\subsection{Valuation with capital requirements by conditional monetary risk measures}\label{sec:cmrmcr}

We now consider the valuation problem in the setting where the capital requirements are given in terms of 
conditional monetary risk measures.

\begin{definition}\label{def:dynrisk}
For $p\in [0,\infty]$ and $t\in\{0,\dots,T-1\}$, a conditional monetary risk measure is a mapping $\rho_t:L^p(\calF_{t+1},\P)\to L^p(\calF_t,\P)$ satisfying
\begin{align}
& \textrm{if } \lambda\in L^p(\calF_t,\P) \textrm{ and } Y\in L^p(\calF_{t+1},\P), \textrm{ then } 
\rho_t(Y+\lambda)=\rho_t(Y)-\lambda,  \label{eq:ci_r}\\
& \textrm{if } Y,\widetilde{Y}\in L^p(\calF_{t+1},\P) \textrm{ and } Y\leq \widetilde{Y}, \textrm{ then } 
\rho_t(Y)\geq \rho_t(\widetilde{Y}),  \label{eq:mo_r}\\
& \rho_t(0)=0. \label{eq:no_r}
\end{align}
A sequence $(\rho_t)_{t=0}^{T-1}$ of conditional monetary risk measures is called a dynamic monetary risk measure.
\end{definition}

The natural conditional monetary risk measures corresponding to current regulatory frameworks are defined in terms of conditional quantile functions. For integer $t\geq 0$, $x\in\R$, $u\in (0,1)$ and $\calF_{t+1}$-measurable $Z$, let 
\begin{align*}
F_{t,-Z}(x)&:=\P_t(-Z\leq x),\\
F_{t,-Z}^{-1}(1-u)&:=\essinf\{m\in L^0(\calF_t,\P):F_{t,-Z}(m)\geq 1-u\}
\end{align*}
and define the conditional versions of value-at-risk and average value-at-risk as 
\begin{align*}
\VaR_{t,u}(Z):=F_{t,-Z}^{-1}(1-u), \quad \AVaR_{t,u}(Z):=\frac{1}{u}\int_0^u\VaR_{t,v}(Z)dv.
\end{align*}
Both $\VaR_{t,u}$ and $\AVaR_{t,u}$ are conditional monetary risk measures in the sense of Definition \ref{def:dynrisk} for $p\geq 1$. 
Given conditional monetary risk measures $\rho_t:L^1(\calF_{t+1},\P)\to L^1(\calF_t,\P)$ we consider here 
\begin{align}\label{eq:Rtbyrhot}
R_t := \rho_t(-X_{t+1}-V_{t+1}), \quad R_T:=0.
\end{align}
Notice that if $R_{t+1}$ is given and $C_{t+1}$ is given by Definition \ref{Ct-definition}, then also $V_{t+1}:=R_{t+1}-C_{t+1}$ is given and therefore $R_t$ is well defined by setting $R_t := \rho_t(-X_{t+1}+V_{t+1})$. Moreover, we may write 
\begin{align}\label{eq:Vtbyrhot}
V_t := \phi_t(X_{t+1}+V_{t+1}), \quad V_T:=0,
\end{align}
where 
\begin{align*}
\phi_t(Y) := \rho_t(-Y)-\essinf_{\Q \in \calQ} \E_t^{\Q}[(\rho_t(-Y)-Y)^+].
\end{align*}

\begin{theorem}\label{phitproperties}
Let $\rho_t:L^1(\calF_{t+1},\P)\to L^1(\calF_t,\P)$ be a conditional monetary risk measure in the sense of Definition \ref{def:dynrisk} and let $\phi_t:L^1(\calF_{t+1},\P)\to L^1(\calF_t,\P)$ be given by \eqref{eq:Vtbyrhot}. Then 
\begin{align}
& \textrm{if } \lambda\in L^p(\calF_t,\P) \textrm{ and } Y\in L^p(\calF_{t+1},\P), \textrm{ then } 
\phi_t(Y+\lambda)=\phi_t(Y)+\lambda,  \label{eq:ci_v}\\
& \textrm{if } Y,\widetilde{Y}\in L^p(\calF_{t+1},\P) \textrm{ and } Y\leq \widetilde{Y}, \textrm{ then } 
\phi_t(Y)\leq \phi_t(\widetilde{Y}),  \label{eq:mo_v}\\
& \phi_t(0)=0. \label{eq:no_v}
\end{align}
Moreover, if $\widetilde{\rho}_t:L^1(\calF_{t+1},\P)\to L^1(\calF_t,\P)$ is a conditional monetary risk measure in the sense of Definition \ref{def:dynrisk} such that $\rho_t\leq \widetilde{\rho}_t$, then 
\begin{align*}
\widetilde{\phi}_t(Y) := \widetilde{\rho}_t(-Y)-\essinf_{\Q \in \calQ} \E_t^{\Q}[(\widetilde{\rho}_t(-Y)-Y)^+]
\end{align*}
satisfies $\phi_t\leq \widetilde{\phi}_t$.
\end{theorem}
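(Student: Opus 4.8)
The plan is to verify each of the four claimed properties of $\phi_t$ by reducing them to the corresponding (known or elementary) properties of $\rho_t$ and of the map $Y\mapsto \essinf_{\Q\in\calQ}\E_t^{\Q}[(\,\cdot\,)^+]$. Write $g_t(Z):=\essinf_{\Q\in\calQ}\E_t^{\Q}[Z^+]$ for $Z\in L^1(\calF_{t+1},\P)$, so that $\phi_t(Y)=\rho_t(-Y)-g_t(\rho_t(-Y)-Y)$. I would record first the two structural facts about $g_t$ that do all the work: $g_t$ is monotone (if $Z\le\widetilde Z$ then $Z^+\le\widetilde Z^+$ pointwise, hence $\E_t^{\Q}[Z^+]\le\E_t^{\Q}[\widetilde Z^+]$ for every $\Q$, and taking $\essinf$ preserves the inequality), and $g_t$ is translation-monotone in the weak sense that for $\lambda\in L^p(\calF_t,\P)$ one has $g_t(Z+\lambda)\le g_t(Z)+\lambda^+$ when $\lambda\ge 0$ and $g_t(Z+\lambda)\ge g_t(Z)-\lambda^-$ — but actually what is needed is just monotonicity of $g_t$ together with the exact cancellation described next.

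For cash-invariance \eqref{eq:ci_v}: using \eqref{eq:ci_r}, $\rho_t(-(Y+\lambda))=\rho_t(-Y-\lambda)=\rho_t(-Y)+\lambda$, so the argument of $g_t$ becomes $\rho_t(-Y)+\lambda-(Y+\lambda)=\rho_t(-Y)-Y$, i.e. it is unchanged; hence $\phi_t(Y+\lambda)=\rho_t(-Y)+\lambda-g_t(\rho_t(-Y)-Y)=\phi_t(Y)+\lambda$. Note this is why the subtracted term is written in terms of $\rho_t(-Y)-Y$ rather than separately in $\rho_t(-Y)$ and $Y$ — the $\calF_t$-measurable shift cancels. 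For normalisation \eqref{eq:no_v}: $\rho_t(0)=0$ by \eqref{eq:no_r}, so $\phi_t(0)=0-g_t(0)=0$.

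Monotonicity \eqref{eq:mo_v} is the one place where a little care is needed, and I expect it to be the main (mild) obstacle, because $\phi_t$ involves $\rho_t$ with both signs. Take $Y\le\widetilde Y$. By \eqref{eq:mo_r}, $\rho_t(-Y)\ge\rho_t(-\widetilde Y)$; write $a:=\rho_t(-Y)$, $\widetilde a:=\rho_t(-\widetilde Y)$, so $a\ge\widetilde a$ and $Y\le\widetilde Y$. I want $a-g_t(a-Y)\le\widetilde a-g_t(\widetilde a-\widetilde Y)$. The obstruction is that increasing $a$ increases both the leading term $a$ and (via $g_t$ monotone) the subtracted term $g_t(a-Y)$, so the two effects compete; the correct statement is that the leading term wins, i.e. $a\mapsto a-g_t(a-Y)$ is nondecreasing in $a$ and $Y\mapsto a-g_t(a-Y)$ is nondecreasing in $Y$ (the latter is immediate since $Y\le\widetilde Y$ gives $a-Y\ge a-\widetilde Y$, hence $g_t(a-Y)\ge g_t(a-\widetilde Y)$). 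For the former, pick $\Q$ and note $z\mapsto z-(z-Y)^+ = \min(z,Y)$ is nondecreasing in $z$; hence $a-\E_t^{\Q}[(a-Y)^+]=\E_t^{\Q}[\min(a,Y)]\le\E_t^{\Q}[\min(\widetilde a,Y)]$ is, wait — this goes the wrong way since $a\ge\widetilde a$. So instead I combine both moves at once: $\phi_t(Y)=\rho_t(-Y)-g_t(\rho_t(-Y)-Y)$, and for each fixed $\Q$, $\rho_t(-Y)-\E_t^{\Q}[(\rho_t(-Y)-Y)^+]=\E_t^{\Q}[\min(\rho_t(-Y),Y)]$. Now $\min(\rho_t(-Y),Y)\le\min(\rho_t(-\widetilde Y),\widetilde Y)$ fails in general, so the clean route is: for each $\Q$, $\min(\rho_t(-Y),Y)\le Y\le\widetilde Y$ and also $\min(\rho_t(-Y),Y)\le\rho_t(-Y)$ — this still does not close it directly, which confirms that monotonicity is the delicate step. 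The robust fix is to use $\rho_t$-monotonicity in the form $\rho_t(-\widetilde Y)\le\rho_t(-Y)$ and the pointwise inequality $\min(\rho_t(-\widetilde Y),\widetilde Y)\ge\min(\rho_t(-\widetilde Y),Y)$ together with $\rho_t(-\widetilde Y)\le\rho_t(-Y)$ giving $\min(\rho_t(-\widetilde Y),Y)\le\min(\rho_t(-Y),Y)$ — again competing. I would therefore instead invoke the backward-recursion interpretation: $V_t=\phi_t(X_{t+1}+V_{t+1})$ is by Theorem \ref{Vt-def-thm}(iii) and Definition \ref{Ct-definition} an $\essinf$ over $\tau$ and $\esssup$ over $\Q$ of $\E_t^{\Q}[\sum_{s=t+1}^{\tau-1}X_s+R_{\tau-1}]$; since for a single period this reads $\phi_t(Y)=\essinf_{\tau\in\{t+1,t+2\}}\esssup_{\Q\in\calQ}\E_t^{\Q}[\,\cdot\,]$ with $\{\tau=t+1\}$ contributing $0$ and $\{\tau=t+2\}$ contributing $\rho_t(-Y)$ (when $\rho_t(-Y)-Y<0$), monotonicity in $Y$ is then inherited from monotonicity of $\rho_t$, of $\E_t^{\Q}$, and stability of $\essinf/\esssup$ under monotone transformations, with all the competition absorbed into the stopping decision. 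This is the approach I would actually write up.

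Finally, the comparison statement: if $\rho_t\le\widetilde\rho_t$, then for every $Y$ and every $\Q$, $\rho_t(-Y)\le\widetilde\rho_t(-Y)$, and by the same ``$\min$'' identity $\rho_t(-Y)-\E_t^{\Q}[(\rho_t(-Y)-Y)^+]=\E_t^{\Q}[\min(\rho_t(-Y),Y)]\le\E_t^{\Q}[\min(\widetilde\rho_t(-Y),Y)]=\widetilde\rho_t(-Y)-\E_t^{\Q}[(\widetilde\rho_t(-Y)-Y)^+]$, where the inequality uses $z\mapsto\min(z,Y)$ nondecreasing; taking $\essinf_{\Q}$ on both sides (which preserves $\le$) gives $\phi_t(Y)\le\widetilde\phi_t(Y)$. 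This last step is clean precisely because only one sign of the risk measure varies, so the ``competition'' seen in the monotonicity proof does not arise.
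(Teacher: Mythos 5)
Your treatment of \eqref{eq:ci_v}, \eqref{eq:no_v} and the comparison $\phi_t\leq\widetilde\phi_t$ is correct, and the identity $\rho_t(-Y)-\E_t^{\Q}[(\rho_t(-Y)-Y)^+]=\E_t^{\Q}[\min(\rho_t(-Y),Y)]$ is exactly the right tool (it is equivalent to the one-line inequality $(\widetilde R-Y)^+\leq(R-Y)^++\widetilde R-R$ for $\widetilde R\geq R$ that the paper itself uses for the comparison part). The genuine gap is the monotonicity step \eqref{eq:mo_v}, where you have the sign of the $\rho_t$-comparison backwards. Property \eqref{eq:mo_r} says that $\rho_t$ is \emph{antitone}; from $Y\leq\widetilde Y$ you get $-\widetilde Y\leq -Y$ and hence $\rho_t(-Y)\leq\rho_t(-\widetilde Y)$, not $\rho_t(-Y)\geq\rho_t(-\widetilde Y)$ as you wrote. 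The ``competition'' between the leading term and the subtracted term that derails your argument is an artefact of this sign error: with $R:=\rho_t(-Y)\leq\widetilde R:=\rho_t(-\widetilde Y)$ and $Y\leq\widetilde Y$, \emph{both} arguments of the minimum increase, so $\min(R,Y)\leq\min(\widetilde R,\widetilde Y)$ pointwise and
\begin{align*}
\phi_t(Y)=\esssup_{\Q\in\calQ}\E_t^{\Q}[\min(R,Y)]\leq\esssup_{\Q\in\calQ}\E_t^{\Q}[\min(\widetilde R,\widetilde Y)]=\phi_t(\widetilde Y)
\end{align*}
in one line --- precisely the route you set up and then abandoned.

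The fallback you say you would actually write up --- deducing monotonicity from the optimal-stopping representation via Theorem \ref{Vt-def-thm}(iii) and Definition \ref{Ct-definition} --- is not an acceptable substitute as it stands. It is never carried out, and it would import hypotheses (stability of $\calQ$ under pasting, weak compactness, $\calQ$-uniform integrability of the payoff) that Theorem \ref{phitproperties} does not assume: the statement concerns $\phi_t$ as a map on $L^1(\calF_{t+1},\P)$ for an arbitrary set $\calQ$, so the proof must be the direct one. A minor further slip: in the comparison argument you speak of taking $\essinf_{\Q}$ of the inequality $\E_t^{\Q}[\min(\rho_t(-Y),Y)]\leq\E_t^{\Q}[\min(\widetilde\rho_t(-Y),Y)]$, whereas $\phi_t(Y)$ and $\widetilde\phi_t(Y)$ are the $\esssup$ over $\Q$ of the two sides; this is harmless because either operation preserves a $\Q$-wise inequality, but it should be stated correctly.
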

\begin{proof}[Proof of Theorem \ref{phitproperties}]
The properties \eqref{eq:ci_v}, \eqref{eq:mo_v} and \eqref{eq:no_v} follow immediately by arguments similar to those in the proof of Proposition 1 in Engsner et al.~\cite{Engsner-Lindholm-Lindskog-17}. The final property follows immediately from the inequality 
$$
(\widetilde{R}-Y)^+\leq (R-Y)^++\widetilde{R}-R \quad \text{for } \widetilde{R}\geq R.
$$
\end{proof}
Theorem \ref{phitproperties} has consequences that should be seen as necessary requirements of any sound valuation method. If $X_1+\dots+X_T=c$ for some constant $c$, then the corresponding value $V_0=c$. If we consider two residual liability cash flows $(X_t)_{t=1}^T$ and $(\widetilde{X}_t)_{t=1}^T$ such that $X_t\leq \widetilde{X}_t$ for every $t$, then the corresponding values satisfy $V_0\leq \widetilde{V}_0$. Similarly, if the sequence of conditional monetary risk measures $(\rho_t)_{t=0}^{T-1}$
are replaced by a more prudent choice $(\widetilde{\rho}_t)_{t=0}^{T-1}$ such that $\rho_t\leq \widetilde{\rho}_t$ for every $t$, then the corresponding values satisfy $V_0\leq \widetilde{V}_0$. 

Let $(V^S_t)_{t=0}^T$ be given by $V^S_t=\sum_{u=1}^t X_u + V_t$, where $V_T=0$, 
$$
V_t=R_t-\essinf_{\Q\in\mathcal{Q}}\E^{\Q}_t[(R_t-X_{t+1}-V_{t+1})_+], \quad R_t=\rho_t(-X_{t+1}-V_{t+1}),
$$
where $\rho_t$ is a suitable conditional monetary risk measure such as $\VaR_t$ or $\AVaR_t$.
It is reasonable to require that $V^S$ is a $\P$-supermartingale which is equivalent to $V_t\geq \E^{\P}_t[X_{t+1}+V_{t+1}]$ which implies $V_t\geq \E^{\P}_t[X_{t+1}+\dots+X_{T}]$. In particular, the $\P$-supermartingale property guarantees the existence of a nonnegative ``risk margin" 
$V_t-\E^{\P}_t[X_{t+1}+\dots+X_{T}]\geq 0$.

\begin{theorem}\label{thm:Vsupermartingale}
Let $X_t\in L^1(\calF_t,\P)$ for $t=1,\dots,T$. Let $L^1(\calF_{t+1},\P)\ni Y_{t+1}\mapsto \rho_t(-Y_{t+1})\in L^1(\calF_{t},\P)$, for $t=0,\dots,T-1$, be a conditional monetary risk measure such that 
\begin{align}\label{eq:crm_cond}
\E^{\P}_t[\rho_t(-Y_{t+1})-Y_{t+1}]>0 \quad\text{or}\quad \P_t(\rho_t(-Y_{t+1})-Y_{t+1}=0)=1.
\end{align} 
Then there exists a set $\calQ$ of probability measures such that $(V^S_t)_{t=0}^{T}$ is a $\P$-supermartingale.  
\end{theorem}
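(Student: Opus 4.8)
The plan is to collapse the $\P$-supermartingale requirement into a single one-step inequality at each time, and then to exhibit a concrete, sufficiently large set $\calQ$ for which that inequality is \emph{forced} by \eqref{eq:crm_cond}. As already observed, $(V^S_t)_{t=0}^{T}$ is a $\P$-supermartingale iff $V_t\ge\E^{\P}_t[X_{t+1}+V_{t+1}]$ for $t=0,\dots,T-1$. Inserting $V_t=R_t-\essinf_{\Q\in\calQ}\E^{\Q}_t[(R_t-X_{t+1}-V_{t+1})^+]$, using that $R_t=\rho_t(-(X_{t+1}+V_{t+1}))$ is $\calF_t$-measurable, and writing $W_t:=\rho_t(-(X_{t+1}+V_{t+1}))-(X_{t+1}+V_{t+1})$, this is equivalent to
\[
\essinf_{\Q\in\calQ}\E^{\Q}_t[W_t^{+}]\ \le\ \E^{\P}_t[W_t],\qquad t=0,\dots,T-1.
\]
Since $W_t$ depends on $\calQ$ through $V_{t+1}$, I would fix $\calQ$ first, let the backward recursion determine $(V_t)_{t=0}^T$ and hence $(W_t)$, and only then verify these inequalities.

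For the choice of $\calQ$ I would take $\calQ=\{\Q:\Q\sim\P\}$ (any further integrability restriction one wants in order to invoke Definition \ref{Ct-definition} is harmless, precisely because $\P\in\calQ$). With $\P\in\calQ$ the recursion is well posed: a backward induction gives $V_{t+1}\in L^1(\calF_{t+1},\P)$, hence $R_t\in L^1(\calF_t,\P)$ and $W_t^{+}\in L^1(\calF_{t+1},\P)$, and $\essinf_{\Q\in\calQ}\E^{\Q}_t[W_t^{+}]$ is then a nonnegative $\calF_t$-measurable random variable dominated by $\E^{\P}_t[W_t^{+}]<\infty$, so $V_t\in L^1(\calF_t,\P)$.

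To verify the one-step inequality, fix $t$ and apply \eqref{eq:crm_cond} with $Y_{t+1}=X_{t+1}+V_{t+1}$; in either alternative one gets $c:=\E^{\P}_t[W_t]\ge0$ a.s. One checks $\P_t(W_t\le c)>0$ a.s.\ (otherwise $W_t>c$ conditionally on a positive-probability $\calF_t$-set, which would force $\E^{\P}_t[W_t]>c$ there, a contradiction). For $\delta>0$ let $\Q_\delta$ be given by $\dd\Q_\delta/\dd\P:=D_\delta$ with $D_\delta:=\big(\indic_{\{W_t\le c\}}+\delta\,\indic_{\{W_t>c\}}\big)\big/\big(\P_t(W_t\le c)+\delta\,\P_t(W_t>c)\big)$; then $D_\delta>0$ and $\E^{\P}[D_\delta]=\E^{\P}[\E^{\P}_t[D_\delta]]=1$, so $\Q_\delta\in\calQ$, and
\[
\E^{\Q_\delta}_t[W_t^{+}]=\frac{\E^{\P}_t[W_t^{+}\indic_{\{W_t\le c\}}]+\delta\,\E^{\P}_t[W_t^{+}\indic_{\{W_t>c\}}]}{\P_t(W_t\le c)+\delta\,\P_t(W_t>c)}\ \longrightarrow\ \frac{\E^{\P}_t[W_t^{+}\indic_{\{W_t\le c\}}]}{\P_t(W_t\le c)}\ \le\ c \quad(\delta\downarrow0),
\]
the last inequality because $W_t^{+}\indic_{\{W_t\le c\}}\le c\,\indic_{\{W_t\le c\}}$ (here $c\ge0$ is used). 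Taking a deterministic sequence $\delta_n\downarrow0$ yields $\essinf_{\Q\in\calQ}\E^{\Q}_t[W_t^{+}]\le\inf_n\E^{\Q_{\delta_n}}_t[W_t^{+}]\le c=\E^{\P}_t[W_t]$, which is exactly the inequality required above; hence $(V^S_t)_{t=0}^T$ is a $\P$-supermartingale.

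The estimate itself is elementary once the reduction is in place, and it uses only that $\rho_t$ is a conditional monetary risk measure together with the single consequence of \eqref{eq:crm_cond} that $\E^{\P}_t[W_t]\ge0$; no minimax identity, weak compactness, or stability under pasting of $\calQ$ is invoked (which is why the other structural assumptions of the paper are absent here). The genuine care — and the main obstacle — is the bookkeeping: ensuring the recursion is well defined for so large a $\calQ$ (handled via $\P\in\calQ$ and the domination bound), realising each $\Q_\delta$ as a bona fide probability measure on $(\Omega,\calF)$ out of the $\calF_{t+1}$-measurable density $D_\delta$, and treating $\essinf_{\Q\in\calQ}$ as an essential infimum of a family of $\calF_t$-measurable random variables while passing to the limit along a fixed sequence $\delta_n\downarrow0$ rather than a pointwise choice $\delta=\delta(\omega)$. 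As a variant that removes the limiting step, one may instead put into $\calQ$ the $\P$-absolutely continuous (not equivalent) measure with density $\indic_{\{W_t\le c\}}/\P_t(W_t\le c)$, for which $\E^{\Q}_t[W_t^{+}]\le c$ exactly.
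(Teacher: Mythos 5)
Your proof is correct, and it reaches the key inequality
\[
\essinf_{\Q\in\calQ}\E^{\Q}_t\big[(R_t-X_{t+1}-V_{t+1})^+\big]\leq \E^{\P}_t\big[R_t-X_{t+1}-V_{t+1}\big]
\]
by the same reduction as the paper, but the construction of $\calQ$ is genuinely different. The paper takes $\calQ=\{\Q\}$ for a single equivalent measure built backward in time via an exponential tilting $D_{t+1}/D_t=\exp(\lambda_t\Phi^{-1}(U_{t+1})-\lambda_t^2/2)$, where $U_{t+1}$ is an auxiliary uniform variable, independent of $\calF_t$ and conditionally countermonotone with $W_{t+1}$; the inequality is then obtained exactly for that one measure by a rearrangement-type bound on $\int_0^{1-p_t}\exp(\lambda_t\Phi^{-1}(u)-\lambda_t^2/2)G_t^{-1}(1-u)\,du$ and a choice of $\lambda_t$ large enough, with \eqref{eq:crm_cond} used to rule out the boundary case $p_t=1$ and to guarantee such a $\lambda_t$ exists. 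You instead take a large $\calQ$ containing $\P$, fix it up front so the recursion (and hence $W_t$) is determined, and then approach the bound with the simple two-valued densities $D_\delta$ along a deterministic sequence $\delta_n\downarrow 0$, using \eqref{eq:crm_cond} only through $c=\E^{\P}_t[W_t]\geq 0$ and the elementary observation $\P_t(W_t\leq c)>0$. What your route buys is robustness: it needs no auxiliary randomization or continuity of the conditional law of $W_{t+1}$ (the paper's countermonotone uniform implicitly requires a sufficiently rich space), and no rearrangement inequality; the price is that $\calQ$ must contain a whole approximating family rather than a single optimizer, so the bound is attained only in the limit, and you must handle the essential infimum over an uncountable family (which you do correctly by passing to a countable subfamily) together with the integrability bookkeeping that makes the recursion well posed for so large a $\calQ$. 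Two minor caveats: your closing variant with the non-equivalent density $\indic_{\{W_t\le c\}}/\P_t(W_t\le c)$ sits uneasily with the paper's standing requirement that members of $\calQ$ be equivalent to $\P$ (and makes $\E^{\Q}_t$ ambiguous off the support), so it should be dropped or flagged as outside the framework; and, like the paper, you do not address whether the constructed measures are market risk neutral, which the theorem's statement does not demand but Definition \ref{Ct-definition} nominally does.
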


\begin{proof}[Proof of Theorem \ref{thm:Vsupermartingale}]
Notice that the supermartingale requirement is equivalent to 
\begin{align}\label{eq:ineq}
\essinf_{\Q\in\mathcal{Q}}\E^{\Q}_t[(R_t-X_{t+1}-V_{t+1})_+]\leq \E^{\P}_t[R_t-X_{t+1}-V_{t+1}]
\end{align}
It is sufficient to find some $\Q$ such that the statement holds for $\calQ=\{\Q\}$. 
We construct this $\Q$ by defining a suitable $\P$-martingale $(D_{t})_{t=0}^T$ corresponding to the change of measure from $\P$ to $\Q$. 

Let $W_{t+1}:=R_t-X_{t+1}-V_{t+1}$, let $G_t(x):=\P_t(W_{t+1}\leq x)$ denote the $\calF_{t}$-conditional distribution function of $W_{t+1}$, and let $p_t:=G_t(0)$. Let $(D_{t})_{t=0}^T$, with $D_0=1$, be a $\P$-martingale satisfying 
\begin{align*}
\frac{D_{t+1}}{D_t}=
\left\{\begin{array}{ll}
1 & \text{if } p_t\in\{0,1\},\\
\exp\big(\lambda_t\Phi^{-1}(U_{t+1})-\lambda_t^2/2\big) & \text{if } p_t\in (0,1),
\end{array}\right.
\end{align*}
where $U_{t+1}$ is independent of $\calF_t$ and uniformly distributed on $(0,1)$ and, conditional of $\calF_t$, $U_{t+1}$ and $W_{t+1}$ are countermonotone. Let $\lambda_t$ be some $\calF_t$-measurable random variable satisfying 
\begin{align*}
\exp\big(\lambda_t \Phi^{-1}(1-p_t)-\lambda_t^2/2\big)\E^{\P}_t\big[W_{t+1}^+\big]\leq \E^{\P}_t\big[W_{t+1}\big]
\quad \text{on } \{p_t \in (0,1)\}.
\end{align*}
By construction, 
\begin{align*}
\E^{\P}_t\bigg[\frac{D_{t+1}}{D_t}W_{t+1}^+\bigg]=\E^{\P}_t\big[W_{t+1}\big] 
\quad \text{on } \{p_t \in \{0,1\}\}.
\end{align*}
Moreover, on $\{p_t\in (0,1)\}$, 
\begin{align*}
\E^{\P}_t\bigg[\frac{D_{t+1}}{D_t}W_{t+1}^+\bigg]
&=\int_0^{1-p_t}\exp\big(\lambda_t \Phi^{-1}(u)-\lambda_t^2/2\big)G_{t}^{-1}(1-u)du\\
&\leq \exp\big(\lambda_t \Phi^{-1}(1-p_t)-\lambda_t^2/2\big)\int_0^{1-p_t}G_{t}^{-1}(1-u)du\\
&=\exp\big(\lambda_t \Phi^{-1}(1-p_t)-\lambda_t^2/2\big)\E^{\P}_t\big[W_{t+1}^+\big]\\
&\leq \E^{\P}_t\big[W_{t+1}\big].  
\end{align*}
\end{proof}

Property \eqref{eq:crm_cond} in Theorem \ref{thm:Vsupermartingale} is satisfied by $\AVaR_{t,u}$ which is an example of so-called strictly expectation bounded risk measures, see Definition 5 and Example 3 in Rockafellar et al.~\cite{Rockafellar-et-al-06}. 

The following lemma is useful for constructing a bounding $\calQ$-uniformly integrable random variable. 

\begin{lemma}\label{lem:unif_int}
For any $\calQ$-uniformly integrable $Z \geq 0$, $\esssup_{\Q\in\calQ}\E_t^{\Q}[Z]$ is a $\calQ$-uniformly integrable random variable.
\end{lemma}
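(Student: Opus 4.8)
The plan is to show that $W_t:=\esssup_{\Q\in\calQ}\E_t^{\Q}[Z]$ satisfies the two defining conditions of $\calQ$-uniform integrability in \eqref{eq:H-uniformly-bounded}: namely that $\sup_t|W_t|$ is dominated by a single nonnegative random variable (which here will simply be $W_t$ itself as we vary $t$, or rather we must produce one dominating variable), and that $\lim_{K\to\infty}\sup_{\Q\in\calQ}\E^{\Q}[W_t\,\indic_{\{W_t\geq K\}}]=0$ uniformly in $t$. The first point is essentially automatic: since $Z\geq 0$ and each $\Q\in\calQ$ is a probability measure, $\E_t^{\Q}[Z]\geq 0$, so $W_t\geq 0$; and one natural dominating variable is $\esssup_{\Q\in\calQ}\E^{\Q}[Z\mid\calF_0]$-type object — but more carefully, the statement really only asks that for each fixed $t$ the random variable $W_t$ is itself $\calQ$-uniformly integrable, so the essential content is the limit condition. (If one wants a single dominating variable over all $t$, one uses stability under pasting and optional-sampling-type arguments to bound all $W_t$ by $W$ where $W$ corresponds to the supremum over a ``worst pasted'' measure; but this is not needed for the pointwise-in-$t$ claim.)

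First I would fix $t$ and $\Q\in\calQ$ and estimate $\E^{\Q}[W_t\,\indic_{\{W_t\geq K\}}]$. The key tool is stability under pasting: given $\Q\in\calQ$ and, for each realisation, a measure $\Q'\in\calQ$ nearly attaining the essential supremum $\E_t^{\Q'}[Z]$, I would paste $\Q$ and $\Q'$ at time $t$ to obtain $\widehat{\Q}\in\calQ$ with $\E^{\widehat{\Q}}[\,\cdot\mid\calF_t]=\E^{\Q'}[\,\cdot\mid\calF_t]$ on $\calF_t$-events and $\widehat{\Q}=\Q$ on $\calF_t$. A measurable-selection / lattice-property argument (the family $\{\E_t^{\Q}[Z]:\Q\in\calQ\}$ is directed upward, again by pasting) lets me choose a sequence $\Q^{(n)}$ with $\E_t^{\Q^{(n)}}[Z]\uparrow W_t$, and pasting each with $\Q$ at $t$ gives $\widehat{\Q}^{(n)}\in\calQ$ with $\E^{\widehat{\Q}^{(n)}}[Z\mid\calF_t]\uparrow W_t$ while $\widehat{\Q}^{(n)}$ agrees with $\Q$ on $\calF_t$. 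Then, using that $\{W_t\geq K\}\in\calF_t$ and the tower property,
\begin{align*}
\E^{\Q}[W_t\,\indic_{\{W_t\geq K\}}]
&=\lim_{n}\E^{\Q}\big[\E^{\widehat{\Q}^{(n)}}[Z\mid\calF_t]\,\indic_{\{W_t\geq K\}}\big]\\
&=\lim_{n}\E^{\widehat{\Q}^{(n)}}[Z\,\indic_{\{W_t\geq K\}}]
\leq \sup_{\widehat\Q\in\calQ}\E^{\widehat\Q}[Z\,\indic_{\{W_t\geq K\}}],
\end{align*}
where the middle equality uses that $\widehat{\Q}^{(n)}$ and $\Q$ coincide on $\calF_t$ and monotone convergence. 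Taking $\sup_{\Q\in\calQ}$ on the left leaves the right-hand side unchanged.

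It then remains to show $\sup_{\widehat\Q\in\calQ}\E^{\widehat\Q}[Z\,\indic_{\{W_t\geq K\}}]\to 0$ as $K\to\infty$. Here I would split: for any level $L$,
$$
Z\,\indic_{\{W_t\geq K\}}\leq Z\,\indic_{\{Z\geq L\}}+L\,\indic_{\{W_t\geq K\}},
$$
so $\sup_{\widehat\Q}\E^{\widehat\Q}[Z\,\indic_{\{W_t\geq K\}}]\leq \sup_{\widehat\Q}\E^{\widehat\Q}[Z\,\indic_{\{Z\geq L\}}]+L\sup_{\widehat\Q}\widehat\Q(W_t\geq K)$. The first term is small for $L$ large by the assumed $\calQ$-uniform integrability of $Z$. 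For the second, by Markov's inequality $\sup_{\widehat\Q}\widehat\Q(W_t\geq K)\leq \sup_{\widehat\Q}\E^{\widehat\Q}[W_t]/K$, and $\sup_{\widehat\Q}\E^{\widehat\Q}[W_t]=\sup_{\widehat\Q}\E^{\widehat\Q}[\E^{\widehat\Q}[Z\mid\calF_t]]$ — wait, this needs care, since $W_t\geq\E_t^{\widehat\Q}[Z]$ is not an equality for a single $\widehat\Q$; instead use the pasting trick once more to get $\sup_{\widehat\Q}\E^{\widehat\Q}[W_t]\leq\sup_{\widehat\Q}\E^{\widehat\Q}[Z]=:M<\infty$ (finiteness of $M$ follows from $\calQ$-uniform integrability of $Z$, taking $K=0$ in the definition after noting $\E^{\widehat\Q}[Z]=\E^{\widehat\Q}[Z\indic_{\{Z<K_0\}}]+\E^{\widehat\Q}[Z\indic_{\{Z\geq K_0\}}]\leq K_0+1$ for suitable $K_0$). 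Hence $L\sup_{\widehat\Q}\widehat\Q(W_t\geq K)\leq LM/K$. Now choose $L$ first to make the uniform-integrability term $<\varepsilon/2$, then $K$ large enough that $LM/K<\varepsilon/2$.

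The main obstacle I anticipate is the measurable-selection step: justifying that the essential supremum $W_t=\esssup_{\Q\in\calQ}\E_t^{\Q}[Z]$ is attained as an increasing limit along a sequence in $\calQ$, so that conditional expectations and the pasting construction can be interchanged with limits cleanly. This rests on the upward-directedness of $\{\E_t^{\Q}[Z]:\Q\in\calQ\}$, which in turn is exactly where stability under pasting of $\calQ$ is used (given $\Q^{(1)},\Q^{(2)}$, pasting them at $t$ on the $\calF_t$-set where $\E_t^{\Q^{(1)}}[Z]\leq\E_t^{\Q^{(2)}}[Z]$ produces a measure in $\calQ$ whose $\calF_t$-conditional $Z$-expectation is the pointwise maximum). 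Once directedness is in hand, the standard characterisation of the essential supremum of a directed family gives the monotone sequence, and the rest is the routine splitting estimate above; everything else — nonnegativity, measurability, the Markov bound — is elementary.
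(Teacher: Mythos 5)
Your proof is correct and follows essentially the same route as the paper's: both reduce $\sup_{\Q\in\calQ}\E^{\Q}\big[W_t\indic_{\{W_t\geq K\}}\big]$ to $\sup_{\Q\in\calQ}\E^{\Q}\big[Z\indic_{\{W_t\geq K\}}\big]$ via the multiple-prior law of iterated expectations and then control the latter by the same two-term splitting (a tail term handled by the $\calQ$-uniform integrability of $Z$ plus a level term handled by a Markov bound on $\sup_{\Q}\Q(W_t\geq K)$). The only difference is that the paper simply cites Riedel's Lemma~1 for the iterated-expectations step, whereas you re-derive it from stability under pasting and upward directedness — a valid, self-contained substitute.
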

\begin{proof}[Proof of Lemma \ref{lem:unif_int}]
We need to show that
$$
\lim_{K\to\infty}\sup_{\Q\in\calQ}\E^{\Q}\Big[\esssup_{\Q\in\calQ}\E_t^{\Q}[Z]\indic_{\{\esssup_{\Q\in\calQ}\E_t^{\Q}[Z] \geq K\}}\Big]=0.
$$ 
If we set $X= Z\indic_{\{\esssup_{\Q\in\calQ}\E_t^{\Q}[Z] \geq K\}}$, then $X$ is $\calQ$-uniformly integrable since it is of the form  $Z\indic_A$. Hence by the law of iterated expectations for $\calQ$-uniformly integrable random variables (Lemma 1 in \cite{Riedel-09}), 
\begin{align*}
\sup_{\Q\in\calQ}\E^{\Q}\Big[\esssup_{\Q\in\calQ}\E_t^{\Q}[Z]\indic_{\{\esssup_{\Q\in\calQ}\E_t^{\Q}[Z] \geq K\}}\Big] 
&= \sup_{\Q\in\calQ}\E^{\Q}\Big[\esssup_{\Q\in\calQ}\E_t^{\Q}[X]\Big] \\
&= \sup_{\Q\in\calQ}\E^{\Q}[X].
\end{align*}
Notice that $\sup_{\Q \in \calQ}\E^{\Q}[Z]<\infty$ since for any $r>0$, 
$$
\sup_{\Q \in \calQ}\E^{\Q}[Z]\leq r+\sup_{\Q \in \calQ}\E^{\Q}[|Z|\indic_{|Z|>r}]
$$
and, due to $\calQ$-uniformly integrability of $Z$, we may choose $r$ to make the second term on the right-hand side sufficiently small.   
Since
$$
\sup_{\Q\in\calQ}\E^{\Q}\Big[\esssup_{\Q\in\calQ}\E_t^{\Q}[Z]\Big] =  \sup_{\Q \in \calQ}\E^{\Q}[Z]<\infty,
$$
the events $A_n = \{\esssup_{\Q\in\calQ}\E_t^{\Q}[Z] \geq n\}$ satisfy $\sup_{\Q \in \calQ} \Q(A_n) \to 0$ as $n \to \infty$. For any $\Q\in\calQ$ and $r_n>0$, 
\begin{align}\label{eq:Anrn}
\sup_{\Q\in\calQ}\E^{\Q}\Big[Z\indic_{A_n}\Big]\leq r_n\sup_{\Q\in\calQ}\Q(A_n)+\sup_{\Q\in\calQ}\E^{\Q}\Big[|Z|\indic_{\{|Z|>r_n\}}\Big].
\end{align}
Consider a sequence $(r_n)_{n=1}^{\infty}$ such that $r_n\to\infty$ and $r_n\sup_{\Q \in \calQ} \Q(A_n) \to 0$ as $n \to \infty$.
Applying this sequence to \eqref{eq:Anrn}, taking the supremum over $\calQ$ and letting $n\to\infty$ proves the statement of the lemma.
\end{proof}

The following theorem says that if the conditional monetary risk measures $\rho_t$ defining $R_t$ in \eqref{eq:Rtbyrhot} satisfy natural and verifiable bounds, then statements in Theorem \ref{Vt-def-thm} hold also in this setting. 

\begin{theorem}\label{thm:Vt-def-thm-specialised}
Let $\mathcal{Q}$ be a set of probability measures satisfying properties (i)-(iii) of Definition \ref{assumption}. 
Consider sequences $(X^{o}_t)_{t=1}^T$, $(X^{r}_t)_{t=1}^T$,  with $X^{o}_t,X^{r}_t\in L^{1}(\calF_t,\Q)$ for $t\in \{1,\dots,T\}$ for every $\Q\in\mathcal{Q}$, $R_T=0$ and $R_t\in L^{1}(\calF_t,\Q)$ for $t\in \{0,\dots,T-1\}$ for every $\Q\in\mathcal{Q}$. 
Let $X_t:=X^{o}_t-X^{r}_t$ and let $(R_t)_{t=0}^T$ be defined by \eqref{eq:Rtbyrhot}. Assume that $\sum_{t=1}^T|X_t|$ is $\calQ$-uniformly integrable. If the conditional monetary risk measures $\rho_t$ in \eqref{eq:Rtbyrhot} satisfy either 
\begin{align}\label{eq:risk_measure_bound}
|\rho_t(Z)|\leq K_{\rho}\esssup_{\Q\in\calQ}\E^{\Q}_t[|Z|]
\text{ for some } K_{\rho}\in (1,\infty)
\end{align}
or 
\begin{align}\label{eq:risk_measure_bound2}
\P \in \calQ \text{ and } 
|\rho_t(Z)|\leq K_{\rho}\E^{\P}_t[|Z|]
\text{ for some } K_{\rho}\in (1,\infty),
\end{align}
then $(H_t)_{t=0}^T$ defined in \eqref{H-process} satisfies that $\esssup_{t=1, \dots T} H_t$ is bounded by a $\calQ$-uniformly integrable random variable. In particular, the statements in Theorem \ref{Vt-def-thm} hold.  
\end{theorem}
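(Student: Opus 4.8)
The plan is to show that, under either bound \eqref{eq:risk_measure_bound} or \eqref{eq:risk_measure_bound2}, each $|R_t|$ is dominated by a $\calQ$-uniformly integrable random variable, and then to bootstrap this into a $\calQ$-uniformly integrable bound on $\sup_t|H_t|$, after which Theorem \ref{Vt-def-thm} applies verbatim. The argument proceeds by backward induction on $t$, starting from $R_T=0$ and $V_T=0$.

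First I would set up the induction. Suppose that $V_{t+1}$ is bounded in absolute value by some $\calQ$-uniformly integrable $Z_{t+1}\geq 0$; this holds trivially at $t+1=T$ with $Z_T=0$. Since $|R_t|=|\rho_t(-X_{t+1}-V_{t+1})|$, the bound \eqref{eq:risk_measure_bound} gives $|R_t|\leq K_\rho\,\esssup_{\Q\in\calQ}\E^{\Q}_t[|X_{t+1}|+|V_{t+1}|]\leq K_\rho\,\esssup_{\Q\in\calQ}\E^{\Q}_t[|X_{t+1}|+Z_{t+1}]$, using monotonicity of the conditional expectation and of the essential supremum. The random variable $|X_{t+1}|+Z_{t+1}$ is $\calQ$-uniformly integrable (a finite sum of such), so by Lemma \ref{lem:unif_int} its conditional essential supremum over $\calQ$ is again $\calQ$-uniformly integrable, and hence so is $K_\rho$ times it. Thus $|R_t|$ is bounded by a $\calQ$-uniformly integrable random variable. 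Under the alternative hypothesis \eqref{eq:risk_measure_bound2} the same conclusion holds with $\E^{\P}_t[\,\cdot\,]$ in place of $\esssup_{\Q\in\calQ}\E^{\Q}_t[\,\cdot\,]$: since $\P\in\calQ$, the conditional $\P$-expectation of a $\calQ$-uniformly integrable random variable is itself $\calQ$-uniformly integrable (again via Lemma \ref{lem:unif_int}, as $\E^{\P}_t[Z]\leq\esssup_{\Q\in\calQ}\E^{\Q}_t[Z]$). Next, from $V_t=R_t-C_t$ with $0\leq C_t\leq\essinf_{\Q\in\calQ}\E^{\Q}_t[(R_t-X_{t+1}-V_{t+1})^+]\leq\esssup_{\Q\in\calQ}\E^{\Q}_t[|R_t|+|X_{t+1}|+Z_{t+1}]$, one obtains $|V_t|\leq|R_t|+|C_t|$, which is again bounded by a $\calQ$-uniformly integrable random variable by the same closure properties; call this bound $Z_t$. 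This closes the induction and produces, for every $t$, a $\calQ$-uniformly integrable $Z_t\geq 0$ with $|R_t|\leq \widetilde Z_t$ and $|V_t|\leq Z_t$ for suitable $\calQ$-uniformly integrable $\widetilde Z_t$.

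Finally I would assemble the bound on $H$. From \eqref{H-process}, $|H_t|\leq\sum_{s=1}^{T-1}(|R_{s-1}|+|R_s|+|X_s|)$ for every $t$, so $\sup_{1\leq t\leq T}|H_t|$ is dominated by a fixed finite sum of $\calQ$-uniformly integrable random variables, which is $\calQ$-uniformly integrable. This is exactly condition \eqref{eq:H-uniformly-bounded}, so the hypotheses of Theorem \ref{Vt-def-thm} are met and its statements (i)–(iii) follow. The main obstacle is getting the induction bookkeeping right: one must verify at each step that all the operations involved — conditional expectation, essential supremum and essential infimum over $\calQ$, taking positive parts, and finite sums — preserve the property of being dominated by a $\calQ$-uniformly integrable random variable, and the only non-routine ingredient there is precisely Lemma \ref{lem:unif_int}, which supplies closure under $\esssup_{\Q\in\calQ}\E^{\Q}_t[\cdot]$; everything else is elementary monotonicity together with the fact that a finite sum of $\calQ$-uniformly integrable random variables is $\calQ$-uniformly integrable.
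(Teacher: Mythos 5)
Your proof is correct and follows essentially the same route as the paper's: a backward induction that propagates a $\calQ$-uniformly integrable bound through the recursion for $R_t$, $C_t$, $V_t$ via Lemma \ref{lem:unif_int}, and then assembles a $\calQ$-uniformly integrable dominating variable for $\sup_t|H_t|$ as a finite sum. The only difference is bookkeeping: the paper keeps every bound as a constant multiple of the single family $S_t=\esssup_{\Q\in\calQ}\E^{\Q}_t\big[\sum_{u=t+1}^T|X_u|\big]$, using the multiple-prior tower property so that each induction step stays within that family, whereas you generate a fresh uniformly integrable bound at each step, which additionally relies on the routine (true, but worth a line) fact that a finite sum of $\calQ$-uniformly integrable variables is again $\calQ$-uniformly integrable.
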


\begin{proof}[Proof of Theorem \ref{thm:Vt-def-thm-specialised}]
Set 
\begin{align*}
S_T:=0, \quad S_t:=\esssup_{\Q\in\calQ}\E^{\Q}_t\bigg[\sum_{u=t+1}^T |X_u|\bigg] \text{ for } t=0,1,\dots,T-1.
\end{align*}
By Lemma \ref{lem:unif_int} all variables $S_t$ are $\calQ$-uniformly integrable. 
We will show by induction that, for all $t$, there exist constants $K_{V,t},K_{R,t}\in (1,\infty)$ such that   
\begin{align}\label{eq:induc_statement}
|V_{t}| \leq K_{V,t}S_{t}, \quad 
|R_{t}| \leq  K_{R,t}S_{t} 
\end{align}
from which the statement of the theorem follows. Note that \eqref{eq:induc_statement} trivially holds for $t=T$. In order to show the induction step, assume that \eqref{eq:induc_statement} holds with $t$ replaced by $t+1$. 
If \eqref{eq:risk_measure_bound2} holds, then 
\begin{align*}
|R_{t}| &= |\rho_t(-X_{t+1}-V_{t+1})| \\
&\leq K_{\rho} \E^\P_t[|X_{t+1}|+|V_{t+1}|]\\
&\leq K_{\rho} \E^\P_t[|X_{t+1}| +K^V_{t+1} S_{t+1} ] \\
&\leq K_{\rho} K_{V,t+1} \esssup_{\Q \in \calQ} \E^\Q_t[|X_{t+1}|+S_{t+1}] \\
&= K_{\rho} K_{V,t+1}S_t,
\end{align*}
where the law of iterated expectations for $\calQ$-uniformly integrable random variables (Lemma 1 in \cite{Riedel-09}) was used in the last step. 
If \eqref{eq:risk_measure_bound} holds, then similarly 
\begin{align*}
|R_{t}| &= |\rho_t(-X_{t+1}-V_{t+1})| \\
&\leq K_{\rho} \E^\Q_t[|X_{t+1}|+|V_{t+1}|]\\
&\leq K_{\rho} \E^\Q_t[|X_{t+1}| +K_{V,t+1} S_{t+1} ] \\
&\leq K_{\rho} K_{V,t+1} \esssup_{\Q \in \calQ} \E^\Q_t[|X_{t+1}|+S_{t+1}] \\
&= K_{\rho} K_{V,t+1}S_t.
\end{align*}
We also note that 
\begin{align*}
C_t&=\esssup_{\Q \in \calQ} \E^\Q_t[(R_t-X_{t+1}-V_{t+1})^+]\\
&\leq |R_{t}|+\esssup_{\Q \in \calQ} \E^\Q_t[|X_{t+1}|+|V_{t+1}|]\\
& \leq (K_{\rho}+1) K_{V,t+1}S_t
\end{align*}
which implies $|V_t|\leq |R_t|+C_t\leq (2K_{\rho}+1)K_{V,t+1}S_t$.
We have proved that \eqref{eq:induc_statement} holds, i.e. the induction step. By the induction principle \eqref{eq:induc_statement} holds for all $t$ and the proof is complete. 
\end{proof}

\section{Construction of sets of probability measures for multiple prior optimal stopping}\label{sec:Q}

Our aim here is to define a useful set $\calQ$ of parametric probability measures that enables the analysis of a wide range of models and that provide solutions to the multiple-prior optimization problem \eqref{Ct-expression}. In particular, the set $\calQ$ constructed below will imply that optimization over $\calQ$ can be reduced to optimization over the set of parameters, see Theorem \ref{thm:general_prior_example} for the precise statement. 

We will define a useful set of probability measures, satisfying all the requirements for applying key results on multiple prior optimal stopping, by defining the corresponding set of density processes $(D_{\lambda,t})_{t=0}^T$ of the form 
\begin{align*}
D_{\lambda,0}:=1, \quad 
D_{\lambda,t}:=\prod_{s=1}^t \int_\Theta f_s(\theta) \lambda_s(\mathrm{d}\theta) \text{ for }
t\in\{1,\dots,T\},
\end{align*}
where $\Theta$ is a set of parameters and $(f_s)_{s=1}^T$ and $(\lambda_s)_{s=1}^T$ are defined below. 

On $(\Omega,\calF_T)$, a probability measure $\Q$ absolutely continuous with respect to $\P$ corresponds to a Radon-Nikodym derivative $D_T\in L^1(\calF_T,\P)$ and together with the filtration $(\calF_t)_{t=0}^T$ give rise to the density process $(D_t)_{t=0}^T$ given by $D_t=\E^{\P}_t[D_T]$.   
Similarly, a set $\calQ$ of probability measures, absolutely continuous with respect to $\P$, corresponds to the set $\calD_T\subset L^1(\calF_T,\P)$ of Radon-Nikodym derivatives. 
Write $\overline{\calD}_T$ for the $L^1$ closure of $\calD_T$ and let $\overline{\calQ}$ be the set of probability measures corresponding to the Radon-Nikodym derivatives $\overline{\calD}_T$. For two probability measures $\Q^{(1)},\Q^{(2)}$ with Radon-Nikodym derivatives $D^{(1)}_T,D^{(2)}_T$ the Radon-Nikodym derivative of the pasting of $\Q^{(1)},\Q^{(2)}$ in $\tau$ is 
$$
D^{(1)}_{\tau}\frac{D^{(2)}_T}{D^{(2)}_{\tau}}.
$$
The following result is both of independent interest and will be relevant for constructing stable sets of probability measures, depending on a parameter, that are useful for multiple prior optimal stopping problems. 

\begin{theorem}\label{lem:stableclosure}
Given $(\Omega,\calF,(\calF_t)_{t=0}^{T},\P)$, let $\calQ$ be a set of probability measures equivalent to $\P$ that is convex and stable under pasting. Let $\calD_T$ be the corresponding set of Radon-Nikodym derivatives and let $\overline{\calD}_T$ be the $L^1$ closure of $\calD_T$. 
Let $\calD_t:=\{D_t=\E^{\P}_t[D_T]:D_T\in \calD_T\}$ and $\overline{\calD}_t:=\{D_t=\E^{\P}_t[D_T]:D_T\in \overline{\calD}_T\}$. 
Then 
\begin{itemize}
\item[(i)] The set $\overline{\calQ}$ corresponding to $\overline{\calD}_T$ is convex and stable under pasting.
\item[(ii)] For each $t$, $\overline{\calD}_t$ is convex and closed in $L^1(\calF_T,\P)$. 
\item[(iii)] If $\calD_T$ is $\P$-uniformly integrable, then for each $t$, $\calD_t$ is weakly relatively compact in $L^1(\calF_T,\P)$ and $\overline{\calD}_t$ is weakly compact in $L^1(\calF_T,\P)$. 
\item[(iv)] If $\calD_{T}$ is $\P$-uniformly integrable, then, for any $\calF_{t+1}$-measurable $\calQ$-uniformly integrable random variable $Y_{t+1}$, 
\begin{align*}
\essinf_{\Q \in \overline{\calQ}}\E^\Q_t\big[Y_{t+1}\big]
=\essinf_{\Q \in \calQ}\E^\Q_t\big[Y_{t+1}\big]
\end{align*}
and similarly with $\essinf$ replaced by $\esssup$. 
\end{itemize}
\end{theorem}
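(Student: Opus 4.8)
The plan is to establish the four items essentially in the order they are stated, since later items lean on earlier ones. For (i), I would work with Radon--Nikodym derivatives directly: convexity of $\overline{\calQ}$ is inherited from convexity of $\calQ$ because a convex combination of two $L^1$-limits is the $L^1$-limit of the corresponding convex combinations. For stability under pasting, take $D^{(1)}_T, D^{(2)}_T \in \overline{\calD}_T$ with approximating sequences $D^{(1,n)}_T, D^{(2,n)}_T \in \calD_T$ and a stopping time $\tau \le T$; the pasting has density $D^{(1)}_\tau D^{(2)}_T / D^{(2)}_\tau$, and I would show that $D^{(1,n)}_\tau D^{(2,n)}_T / D^{(2,n)}_\tau \to D^{(1)}_\tau D^{(2)}_T / D^{(2)}_\tau$ in $L^1(\calF_T,\P)$. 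The delicate point is that division by $D^{(2,n)}_\tau$ need not be $L^1$-continuous without control near zero; here the hypothesis that every element of $\calQ$ (not just $\overline{\calQ}$) is equivalent to $\P$ does not by itself give a uniform lower bound, so I would instead pass to a subsequence along which $D^{(2,n)}_T \to D^{(2)}_T$ and $D^{(1,n)}_T\to D^{(1)}_T$ $\P$-a.s., use that $D^{(1)}_T,D^{(2)}_T>0$ a.s.\ (as limits lie in $\overline{\calD}_T$ but one must argue the limit stays strictly positive, or alternatively work inside $\overline{\calQ}$ where the pasting formula is interpreted on $\{D^{(2)}_\tau>0\}$), and then invoke a Scheff\'e / generalized dominated convergence argument, noting that all the densities integrate to $1$. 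I expect this continuity-of-pasting step to be the main obstacle.

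For (ii), convexity of $\overline{\calD}_t$ follows from convexity of $\overline{\calD}_T$ together with linearity of $D_T\mapsto \E^\P_t[D_T]$. For closedness of $\overline{\calD}_t$ in $L^1(\calF_T,\P)$: the conditional expectation $\E^\P_t[\cdot]$ is an $L^1$-contraction, so if $\E^\P_t[D^{(n)}_T]\to g$ in $L^1$ with $D^{(n)}_T\in\overline{\calD}_T$, I would need to produce $D_T\in\overline{\calD}_T$ with $\E^\P_t[D_T]=g$. This is not automatic from contractivity alone; the clean route is to postpone closedness and first prove the compactness in (iii), from which (ii)'s closedness of $\overline{\calD}_t$ drops out (a weakly compact set is weakly closed, hence norm closed). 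So in practice I would prove (iii) before finishing (ii).

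For (iii), if $\calD_T$ is $\P$-uniformly integrable, then by the Dunford--Pettis theorem $\calD_T$ is weakly relatively compact in $L^1(\calF_T,\P)$, and its closed convex hull --- in particular $\overline{\calD}_T$ if $\calD_T$ is already convex, which it is since $\calQ$ is convex --- is weakly compact (Krein--\v{S}mulian, or just: the weak closure of a relatively weakly compact convex set is weakly compact and, being convex, coincides with the norm closure). Then $\calD_t=\E^\P_t[\calD_T]$ is the image of a relatively weakly compact set under the bounded linear operator $\E^\P_t[\cdot]$, hence relatively weakly compact; and $\overline{\calD}_t$ is contained in the weakly compact set $\E^\P_t[\overline{\calD}_T]$ (which is weakly compact as the continuous linear image of a weakly compact set, using that $\E^\P_t$ is weak-weak continuous) and is norm closed as the norm closure of $\calD_t$ --- wait, one must check $\overline{\calD}_t$ as defined (images of $\overline{\calD}_T$) equals the closure of $\calD_t$; since $\E^\P_t$ is an $L^1$-contraction, $\E^\P_t[\overline{\calD}_T]\subseteq \overline{\E^\P_t[\calD_T]}$, and the reverse inclusion uses weak compactness of $\E^\P_t[\overline{\calD}_T]$. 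Thus $\overline{\calD}_t$ is a norm-closed subset of a weakly compact set, hence weakly compact; this simultaneously finishes (ii).

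For (iv), this is the payoff. Since $\calQ\subseteq\overline{\calQ}$ the inequality $\essinf_{\Q\in\overline{\calQ}}\E^\Q_t[Y_{t+1}]\le\essinf_{\Q\in\calQ}\E^\Q_t[Y_{t+1}]$ is immediate. For the reverse, I would fix $\Q\in\overline{\calQ}$ with density $D_T$ and an approximating sequence $D^{(n)}_T\in\calD_T$, $D^{(n)}_T\to D_T$ in $L^1$. On $\{D^{(n)}_t>0\}$ one has $\E^{\Q^{(n)}}_t[Y_{t+1}] = \E^\P_t[D^{(n)}_{t+1}Y_{t+1}]/D^{(n)}_t$ (Bayes); the numerators converge in $L^1$ using $\P$-uniform integrability of $\{D^{(n)}_{t+1}\}$ and $\calQ$-uniform integrability of $Y_{t+1}$ (the latter is exactly what makes the products $\calQ$-uniformly integrable, so one can pass the limit through $\E^\P_t$ via Lemma~\ref{lem:unif_int}-type reasoning), and the denominators converge to $D_t>0$. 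Passing to an a.s.-convergent subsequence gives $\E^{\Q^{(n)}}_t[Y_{t+1}]\to\E^\Q_t[Y_{t+1}]$ a.s., whence $\essinf_{\Q'\in\calQ}\E^{\Q'}_t[Y_{t+1}]\le\E^{\Q}_t[Y_{t+1}]$ a.s.\ for this subsequence's limit; taking the essential infimum over $\Q\in\overline{\calQ}$ on the right gives the reverse inequality. The $\esssup$ case is identical with signs reversed. The only real subtlety beyond (i) is making the conditional-expectation limit rigorous on the exceptional set where densities could degenerate, which is handled by restricting to $\overline{\calQ}$'s equivalent measures and the uniform integrability hypotheses.
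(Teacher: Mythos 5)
Your treatment of (i), (iii) and (iv) is essentially the paper's. For (i) you pass to an a.s.\ convergent subsequence, use strict positivity of the densities, and close with a Scheff\'e-type argument for positive random variables with unit expectation --- exactly the paper's route (the positivity worry you raise about limits in $\overline{\calD}_T$ is a point the paper itself passes over silently, so flagging it is fair). For (iv) you use the same Bayes-formula-plus-uniform-integrability limiting argument along an a.s.\ convergent subsequence. For (iii) your route --- $\overline{\calD}_T$ is weakly compact as the norm (hence weak) closure of the convex, weakly relatively compact set $\calD_T$, and $\overline{\calD}_t$ is its image under the weak-weak continuous operator $\E^{\P}_t$ --- is a mild variant of the paper's argument (which instead combines weak relative compactness with the norm-closedness supplied by (ii)), and it works.

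The genuine gap is in (ii). You propose to obtain norm-closedness of $\overline{\calD}_t$ as a corollary of the weak compactness in (iii), but (iii) carries the hypothesis that $\calD_T$ is $\P$-uniformly integrable, whereas (ii) is asserted unconditionally; your plan therefore proves only a weaker statement. The paper closes (ii) without any integrability assumption by a lifting trick that uses what you have already established in (i): given $D_{n,t}\to g$ in $L^1$ with $D_{n,t}\in\overline{\calD}_t$, fix any $D'_T\in\overline{\calD}_T$, put $D'_t:=\E^{\P}_t[D'_T]$, and set $D_{n,T}:=D_{n,t}\,D'_T/D'_t$. Each $D_{n,T}$ lies in $\overline{\calD}_T$, being the density of a pasting at the deterministic time $t$ of two elements of $\overline{\calQ}$, which is stable under pasting by (i); and $\E^{\P}\big[|D_{n,T}-g\,D'_T/D'_t|\big]=\E^{\P}\big[|D_{n,t}-g|\,\E^{\P}_t[D'_T]/D'_t\big]=\E^{\P}\big[|D_{n,t}-g|\big]\to 0$. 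Hence $g\,D'_T/D'_t\in\overline{\calD}_T$ by $L^1$-closedness of $\overline{\calD}_T$, and $g=\E^{\P}_t\big[g\,D'_T/D'_t\big]\in\overline{\calD}_t$. This lifting argument is the one idea missing from your outline; with it, (ii) holds in full generality and the rest of your plan goes through.
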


\begin{remark}\label{rem:ui}
Since $\E^{\P}[D]=1$ for $D\in \calD_{T}$ (and similarly for $\overline{\calD}_{T}$), by Lemma 4.10 in Kallenberg \cite{Kallenberg-02}, $\calD_{T}$ is uniformly integrable if 
$$
\lim_{\P(A)\to 0}\sup_{\Q\in \calQ}\Q(A)=0
$$
(and similarly for $\overline{\calD}_{T}$). 
\end{remark}

\begin{proof}[Proof of Theorem \ref{lem:stableclosure}]
For both statements $(i)$ and $(ii)$, it is straightforward to verify that convexity holds so we only prove the remaining claims. 

To prove $(i)$, consider any stopping time $\tau$ and $D^{(1)}_T,D^{(2)}_T\in \overline{\calD}_T$. Take $(D^{(1)}_{n,T})_{n\geq 1},(D^{(2)}_{n,T})_{n\geq 1}\subset \calD_T$ such that $D^{(1)}_{n,T}\to D^{(1)}_T$ and $D^{(2)}_{n,T}\to D^{(2)}_T$ in $L^1$. 
Since $D^{(1)}_{n,T},D^{(2)}_{n,T}\in \calD_T$ and $\calQ$ is stable under pasting, the Radon-Nikodym derivative of the pasting of $D^{(1)}_{n,T},D^{(2)}_{n,T}\in \calD_T$ in $\tau$ is also an element in $\calD_T$. Therefore, statement (i) is proved if we show that there exists a subsequence $(n_i)$ such that 
\begin{align}\label{eq:subseqconvL1}
D^{(1)}_{n_i,\tau}\frac{D^{(2)}_{n_i,T}}{D^{(2)}_{n_i,\tau}} \to D^{(1)}_{\tau}\frac{D^{(2)}_T}{D^{(2)}_{\tau}}
\text{ in } L^1. 
\end{align}
Since, for $k=1,2$, 
\begin{align*}
\E^{\P}[|D^{(k)}_{n,T}-D^{(k)}_{T}|]&=\E^{\P}[\E^{\P}_\tau[|D^{(k)}_{n,T} - D^{(k)}_{T}|]]\\
&\geq \E^{\P}[|\E^{\P}_{\tau}[D^{(k)}_{n,T} - D^{(k)}_{T}]|]\\
&=\E^{\P}[|D^{(k)}_{n,\tau} - D^{(k)}_{\tau}|], 
\end{align*}
we see that $D^{(k)}_{n,\tau}\to D^{(k)}_{\tau}$ in $L^1$. Since convergence in $L^1$ implies convergence in probability which in turn implies a.s. convergence along some subsequence $(n_i)$, we have 
$$
D^{(1)}_{n_i,\tau}\frac{D^{(2)}_{n_i,T}}{D^{(2)}_{n_i,\tau}} \to D^{(1)}_{\tau}\frac{D^{(2)}_T}{D^{(2)}_{\tau}}
\text{ a.s.}, 
$$ 
where we used the fact that $D^{(2)}_{n_i,\tau}$ and $D^{(2)}_{\tau}$ are strictly positive a.s. Since the terms of the sequence on the left-hand side are positive and all have expected values equal to $1$, this sequence is uniformly integrable. Therefore, by Proposition 4.12 in \cite{Kallenberg-02}, the a.s. convergence can be replaced by convergence in $L^1$, i.e.~\eqref{eq:subseqconvL1} holds.   

We now prove $(ii)$. Consider an $L^1$ convergent sequence $(D_{n,t})_{n\geq 1}\subset \overline{\calD}_t$ with limit $D_t$. We will prove that $D_t\in \overline{\calD}_t$. 
Take an arbitrary $D'_T\in \overline{\calD}_T$ and let $D'_t:=\E^{\P}_t[D'_T]$. Set 
\begin{align*}
D_{n,T}:=D_{n,t}\frac{D'_T}{D'_t}, \quad D_{T}:=D_{t}\frac{D'_T}{D'_t}.
\end{align*}
By construction $(D_{n,T})_{n\geq 1}\subset \overline{\calD}_T$ and $D_{n,T}\to D_T$ in $L^1$. Hence, $D_T\in \overline{\calD}_T$. Since $D_t=\E^{\P}[D_T]\in \overline{\calD}_t$ and $D_{n,t}\to D_t$ in $L^1$ the proof of $(ii)$ is complete. 

We now prove $(iii)$. From $(i)$ and $(ii)$ follow that $\overline{\calQ}$ is convex and stable under pasting and, for each $t$, $\overline{\calD}_{t}$ is a convex and closed subset of $L^1(\calF_T,\P)$. A convex and closed subset of $L^1(\calF_T,\P)$ is weakly closed (Theorem A.63 in \cite{Foellmer-Schied-16}). A bounded and uniformly integrable subset of $L^1(\calF_T,\P)$ is weakly relatively compact (Theorem A.70 in \cite{Foellmer-Schied-16}). 
Each $\calD_{t}$ is a bounded subset of $L^1(\calF_T,\P)$:
$$
\sup_{D_t\in \calD_{t}}\Vert D_t\Vert_{L^1}=\sup_{D_t\in \calD_{t}}\E^{\P}[D_t]=1.
$$
Hence, weak relative compactness of $\calD_{t}$ follows if $\calD_{t}$ is uniformly integrable. 
Similarly for $\overline{\calD}_{t}$. 
Moreover, $\calD_{t}$ is uniformly integrable if $\calD_{T}$ is uniformly integrable, as the following argument shows. 
By Lemma 4.10 in \cite{Kallenberg-02}, since $\sup_{D\in \calD_{T}}\E^{\P}[D]=1$, $\calD_{T}$ is $\P$-uniformly integrable if and only if $\lim_{\P(A)\to 0}\sup_{D\in \calD_{T}}\E^{\P}[D;A]=0$. If the latter holds, then in particular
$\lim_{r\to\infty}\sup_{D\in \calD_{T}}\E^{\P}[D;\E^{\P}_t[D]>r]=0$ which is equivalent to 
$$
\lim_{r\to\infty}\sup_{D\in \calD_{T}}\E^{\P}[\E^{\P}_t[D];\E^{\P}_t[D]>r]=0.
$$ 
Hence, $\calD_{t}$ is $\P$-uniformly integrable if $\calD_{T}$ is $\P$-uniformly integrable.
By the same argument, $\overline{\calD}_{t}$ is uniformly integrable if $\overline{\calD}_{T}$ is uniformly integrable. 
However, $\overline{\calD}_{T}$ is uniformly integrable since it is the closure of a uniformly integrable set in $L^1$. 
Hence, $\overline{\calD}_{t}$ is weakly compact in $L^1(\calF_T,\P)$ for every $t$. 
The proof of $(iii)$ is complete.

It remains to prove $(iv)$. 
Notice first that 
\begin{align*}
\essinf_{\Q\in\calQ}\E^{\Q}_t[Y_{t+1}]
=\essinf_{D\in\calD_{T}}\frac{1}{D_t}\E^{\P}_t[DY_{t+1}]
=\essinf_{D\in\calD_{T}}\frac{1}{D_t}\E^{\P}_t[D_{t+1}Y_{t+1}].
\end{align*}
Take $D^*\in\overline{\calD}_{T}$ and $(D_n)\subset \calD_{T}$ with $D_n\to D^*$ in $L^1$. Therefore, 
$D_n\stackrel{\P}{\to}D^*$ which implies $D_nY_{t+1}\stackrel{\P}{\to}D^*Y_{t+1}$. Moreover, $D_{n,t}\stackrel{\P}{\to}D^{*}_t$ since $D_{n,t}=\E^{\P}_t[D_n]$, $D^{*}_{t}=\E^{\P}_t[D^{*}]$ and 
\begin{align*}
\E^{\P}[|\E^{\P}_t[D_n]-\E^{\P}_t[D^{*}]|]\leq \E^{\P}[\E^{\P}_t[|D_n-D^{*}|]]
=\E^{\P}[|D_n-D^{*}|]\to 0
\end{align*}
and convergence in $L^1$ implies convergence in probability. 
Since $Y_{t+1}$ is $\calQ$-uniformly integrable, $\{DY_{t+1}:D\in\calD_{T}\}$ is $\P$-uniformly integrable. Therefore, $D_nY_{t+1}\stackrel{\P}{\to}D^*Y_{t+1}$ implies $D_nY_{t+1}\to D^*Y_{t+1}$ in $L^1$. 
This further implies that $\E^{\P}_t[D_nY_{t+1}]\to \E^{\P}_t[D^*Y_{t+1}]$ in $L^1$ since
\begin{align*}
\E^{\P}[|\E^{\P}_t[D_nY_{t+1}]-\E^{\P}_t[D^*Y_{t+1}]|]
&\leq \E^{\P}[\E^{\P}_t[|D_nY_{t+1}-D^*Y_{t+1}|]]\\
&=\E^{\P}[|D_nY_{t+1}-D^*Y_{t+1}|]\to 0.
\end{align*}
In particular, 
\begin{align*}
\frac{1}{D_{n,t}}\E^{\P}_t[D_nY_{t+1}]\stackrel{\P}{\to} \frac{1}{D^{*}_{t}}\E^{\P}_t[D^{*}Y_{t+1}]
\end{align*}
which implies that there exists a subsequence $(n_i)$ such that 
\begin{align*}
\frac{1}{D_{n_i,t}}\E^{\P}_t[D_{n_i}Y_{t+1}]\stackrel{\text{a.s.}}{\to} \frac{1}{D^{*}_{t}}\E^{\P}_t[D^{*}Y_{t+1}]. 
\end{align*}
Therefore, for any $D^*\in\overline{\calD}_{T}$, 
\begin{align*}
\frac{1}{D^{*}_t}\E^{\P}_t[D^*Y_{t+1}]\geq \essinf_{D\in \calD_{T}}\frac{1}{D_t}\E^{\P}_t[DY_{t+1}].
\end{align*}
The same argument shows the corresponding identity for $\esssup$. The proof is complete.  
\end{proof}

Consider a parameter set $\Theta$ which is taken to be a subset of a complete and separable metric space. 
For each $t\in\{1,\dots,T\}$, let $f_t\geq 0$ be a measurable function on $\Omega \times \Theta$ such that $\omega \mapsto f_t(\omega,\theta)$ is $\calF_t$-measurable for each $\theta\in\Theta$. 
It is assumed that the $\calF_t$-measurable random variables $f_t(\theta)$ satisfy 
\begin{align}
& \essinf_{\theta\in\Theta}f_t(\theta)>0 \; \P\text{-a.s.~for } t \in\{1,\dots,T\}, \label{eq:UQcond1} \\
& \E^{\P}_{t-1}[f_t(\theta)]=1 \text{ for } (t,\theta) \in\{1,\dots,T\} \times \Theta. \label{eq:UQcond3}
\end{align}
For each $t\in\{1,\dots,T\}$, let $\lambda_t$ be an $\calF_{t-1}$-measurable random element in the space $\calP(\Theta)$ of probability measures on $\Theta$ equipped with the topology of weak convergence. Let $\Lambda_t$ be the set of all such random probability measures. 
For $\lambda_t\in\Lambda_t$ for all $t$, let 
\begin{align}\label{eq:D_lambda}
D_{\lambda,T}:=\prod_{t=1}^T \int_\Theta f_t(\theta) \lambda_t(\mathrm{d}\theta), 
\quad D_{\lambda,t}:=\E^{\P}_t[D_{\lambda,T}] \text{ for } t<T. 
\end{align}
Notice that, due to properties \eqref{eq:UQcond1} and \eqref{eq:UQcond3}, $(D_{\lambda,t})_{t=0}^T$ is a positive $\P$-martingale with $\E^{\P}[D_{\lambda,t}]=1$. 
Let 
\begin{align*}
\calD_{f,T} := \bigg\{\prod_{t=1}^T f_t(\theta):\theta\in\Theta\bigg\}, \quad 
\widetilde{\calD}_{f,T} := \Big\{D_{\lambda,T}:\lambda_t\in \Lambda_t \text{ for all } t\Big\} 
\end{align*}
and let $\overline{\calD}_{f,T}$ be the $L^1$-closure of $\calD_{f,T}$. For $t=0,1,\dots,T-1$, let 
\begin{align*}
\calD_{f,t} := \Big\{D_t=\E^{\P}_t[D_T]:D_T\in \calD_{f,T}\Big\}  
\end{align*}
and let $\widetilde{\calD}_{f,t}$ and $\overline{\calD}_{f,t}$ be defined analogously. 

\begin{definition}\label{def:curlyQs}
Denote by $\calQ_{\Theta},\widetilde{\calQ}_{\Theta},\overline{\calQ}_{\Theta}$ the sets of probability measures corresponding to the sets $\calD_{f,T},\widetilde{\calD}_{f,T},\overline{\calD}_{f,T}$ of Radon-Nikodym derivatives with respect to $\P$. 
\end{definition}

Notice that $\calQ_{\Theta}$ corresponds to only considering measures $\lambda_t(\cdot)=1_{\{\theta\}}(\cdot)$ in \eqref{eq:D_lambda}. 
We will show in Theorem \ref{thm:general_prior_example} that $\overline{\calQ}_{\Theta}$ has the properties assumed in Theorem \ref{Vt-def-thm}. We also show that Theorem \ref{Vt-def-thm} holds also for $\widetilde{\calQ}_{\Theta}$. 

\begin{theorem}\label{thm:general_prior_example}
Consider the sets $\calQ_{\Theta},\widetilde{\calQ}_{\Theta},\overline{\calQ}_{\Theta}$ and $\calD_{f,T},\widetilde{\calD}_{f,T},\overline{\calD}_{f,T}$ in Definition \ref{def:curlyQs}. 
\begin{itemize}
\item[(i)] The sets $\widetilde{\calQ}_{\Theta}$ and $\overline{\calQ}_{\Theta}$ are convex and stable under pasting. 
\item[(ii)] For every $t\in\{1,\dots,T\}$, $\overline{\calD}_{f,t}$ is closed in $L^1(\calF_T,\P)$. 
\item[(iii)] If $\widetilde{\calD}_{f,T}$ is $\P$-uniformly integrable, then $\widetilde{\calD}_{f,t}$ is weakly relatively compact in $L^1(\calF_T,\P)$ and $\overline{\calD}_{f,t}$ is weakly compact in $L^1(\calF_T,\P)$ for every $t\in\{0,\dots,T\}$. 
\item[(iv)] If $\widetilde{\calD}_{f,T}$ is $\P$-uniformly integrable, then, for any $\calF_{t+1}$-measurable $\widetilde{\calQ}_{\Theta}$-uniformly integrable random variable $Y_{t+1}$, 
\begin{align*}
\essinf_{\Q \in \overline{\calQ}_{\Theta}}\E^\Q_t\big[Y_{t+1}\big]
&=\essinf_{\Q \in \widetilde{\calQ}_{\Theta}}\E^\Q_t\big[Y_{t+1}\big]\\
&=\essinf_{\Q \in \calQ_{\Theta}}\E^\Q_t\big[Y_{t+1}\big]\\
&=\essinf_{\theta  \in \Theta} \E^\P_t\big[Y_{t+1} f_{t+1}(\theta)\big]
\end{align*}
and similarly with $\essinf$ replaced by $\esssup$. 
\end{itemize}
\end{theorem}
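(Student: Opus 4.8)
\textbf{Plan for the proof of Theorem \ref{thm:general_prior_example}.}
The overall strategy is to reduce everything to the already-established Theorem \ref{lem:stableclosure} by showing that $\widetilde{\calQ}_{\Theta}$ plays the role of the convex, stable-under-pasting set $\calQ$ there, and that $\overline{\calQ}_{\Theta}$ is its closure. Concretely I would proceed as follows. First, for (i), convexity of $\widetilde{\calQ}_{\Theta}$ is immediate: a convex combination $\alpha D_{\lambda,T}+(1-\alpha)D_{\mu,T}$ need not factor over $t$, but one can build it recursively --- at each step $t$ the integrand $\int_\Theta f_t(\theta)\nu_t(\mathrm d\theta)$ with an appropriate $\calF_{t-1}$-measurable mixing weight reproduces the convex combination of the partial products; alternatively, and more cleanly, observe that $\widetilde\calD_{f,T}$ is exactly the set of terminal values of density processes whose multiplicative increments $D_{\lambda,t}/D_{\lambda,t-1}=\int_\Theta f_t(\theta)\lambda_t(\mathrm d\theta)$ range over the ($\calF_{t-1}$-conditionally) convex, and (by \eqref{eq:UQcond1}) uniformly-bounded-below, set $\{\int f_t\,\mathrm d\lambda_t:\lambda_t\in\Lambda_t\}$; a set of density processes defined by a conditionally convex, stable family of one-step increments is automatically convex and stable under pasting, since pasting in $\tau$ just swaps increments at times $>\tau$, which stays in the family because the increment family at each fixed $t$ is the same regardless of the past. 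Then stability and convexity of $\overline{\calQ}_{\Theta}$ follow from Theorem \ref{lem:stableclosure}(i) applied to $\calQ=\widetilde{\calQ}_{\Theta}$, once one checks $\overline{\calD}_{f,T}$ (the closure of $\calD_{f,T}$) coincides with the closure of $\widetilde\calD_{f,T}$ --- the inclusion $\calD_{f,T}\subseteq\widetilde\calD_{f,T}$ is clear (take $\lambda_t=\delta_\theta$), and the reverse-direction density (every $D_{\lambda,T}$ is an $L^1$-limit of elements of $\calD_{f,T}$, or at least their closures agree) is where a short approximation argument is needed, e.g.\ approximating $\lambda_t$ by finitely-supported measures and using a measurable-selection/recombination argument.

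For (ii) and (iii), once $\overline{\calD}_{f,T}$ is identified as an $L^1$-closed set that is the closure of the convex stable set $\widetilde\calD_{f,T}$, parts (ii) and (iii) are \emph{direct} consequences of Theorem \ref{lem:stableclosure}(ii)--(iii): closedness of each $\overline{\calD}_{f,t}$ in $L^1(\calF_T,\P)$, weak relative compactness of $\widetilde\calD_{f,t}$, and weak compactness of $\overline{\calD}_{f,t}$, under the stated $\P$-uniform-integrability hypothesis on $\widetilde\calD_{f,T}$. The only thing to verify is that the hypotheses of Theorem \ref{lem:stableclosure} are met, namely that $\widetilde\calQ_\Theta$ is equivalent to $\P$ (which follows from \eqref{eq:UQcond1}, giving strictly positive increments, hence a strictly positive martingale) and convex and stable under pasting (part (i)).

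For (iv), the first two equalities $\essinf_{\overline\calQ_\Theta}=\essinf_{\widetilde\calQ_\Theta}=\essinf_{\calQ_\Theta}$ follow: the first from Theorem \ref{lem:stableclosure}(iv) (applied with $\calQ=\widetilde\calQ_\Theta$), and the middle one from the fact that $\calQ_\Theta\subseteq\widetilde\calQ_\Theta$ gives ``$\geq$'' trivially while ``$\leq$'' follows because for any $\lambda$ the one-step increment $\int f_{t+1}(\theta)\lambda_{t+1}(\mathrm d\theta)$ is a $\lambda_{t+1}$-average of the increments $f_{t+1}(\theta)$, so $\E^\P_t[Y_{t+1}\int f_{t+1}\,\mathrm d\lambda_{t+1}]=\int_\Theta\E^\P_t[Y_{t+1}f_{t+1}(\theta)]\,\lambda_{t+1}(\mathrm d\theta)\geq \essinf_{\theta}\E^\P_t[Y_{t+1}f_{t+1}(\theta)]$ by Fubini (justified by the $\widetilde\calQ_\Theta$-uniform integrability of $Y_{t+1}$ together with \eqref{eq:UQcond3}); this same computation simultaneously yields the last equality $\essinf_{\calQ_\Theta}\E^\Q_t[Y_{t+1}]=\essinf_{\theta\in\Theta}\E^\P_t[Y_{t+1}f_{t+1}(\theta)]$, since over $\calQ_\Theta$ we are taking $\lambda_{t+1}=\delta_\theta$ and the conditional expectation $\E^\Q_t[Y_{t+1}]$ depends only on the increment at time $t+1$ (the density-process factors at times $\leq t$ cancel in $\frac{1}{D_t}\E^\P_t[D_{t+1}Y_{t+1}]$). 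The $\esssup$ statements are identical with signs reversed.

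\textbf{Main obstacle.} The routine parts are (ii), (iii) and the ``$\geq$'' directions of (iv), which just invoke Theorem \ref{lem:stableclosure} and Jensen/Fubini. The delicate point is part (i): establishing that $\widetilde\calD_{f,T}$ is genuinely stable under pasting and convex \emph{as defined} (a product of integrals, not an abstract increment family), and especially that its $L^1$-closure equals $\overline{\calD}_{f,T}$ (the closure of the smaller set $\calD_{f,T}$) --- this requires the measurable-selection/approximation argument using separability of $\Theta$ and \eqref{eq:UQcond1}, and is where I would expect to spend most of the effort, making sure the $\calF_{t-1}$-measurability of the mixing kernels $\lambda_t$ is preserved under all the manipulations.
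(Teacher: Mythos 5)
Your proposal follows essentially the same route as the paper: prove convexity and stability under pasting of $\widetilde{\calQ}_{\Theta}$ by exhibiting the one-step increments of the convex combination and of the pasting as $\int_\Theta f_t(\theta)\,\nu_t(\rmd\theta)$ for explicit $\calF_{t-1}$-measurable $\nu_t\in\Lambda_t$ (the paper writes out exactly the reweighted mixture $\frac{1}{D^{(3)}_t}\big(cD^{(1)}_t\lambda^{(1)}_{t+1}+(1-c)D^{(2)}_t\lambda^{(2)}_{t+1}\big)$ for convexity and the indicator-mixed $\indic\{s\leq\tau\}\lambda^{(1)}_s+\indic\{s>\tau\}\lambda^{(2)}_s$ for pasting), then hand parts (ii)--(iv) to Theorem \ref{lem:stableclosure}; the last two identities in (iv) come from the same conditional averaging over $\lambda_{t+1}$ that you describe. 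The one point of divergence is the issue you single out as the main obstacle: whether the $L^1$-closure of $\calD_{f,T}$ coincides with that of $\widetilde{\calD}_{f,T}$. This cannot be settled by approximating $\lambda_t$ with finitely supported measures: already for a two-point $\Theta$ the mixture $\tfrac12 f_1(\theta_1)+\tfrac12 f_1(\theta_2)$ times downstream factors lies in $\widetilde{\calD}_{f,T}$ but not in the (two-element, hence closed) set $\calD_{f,T}$, so the two closures genuinely differ in general, and the closure of $\calD_{f,T}$ taken literally need not even be stable under pasting. The paper does not attempt any such approximation; it simply invokes Theorem \ref{lem:stableclosure} with $\calQ=\widetilde{\calQ}_{\Theta}$, i.e.\ it implicitly reads $\overline{\calD}_{f,T}$ as the $L^1$-closure of $\widetilde{\calD}_{f,T}$ (which is also the reading consistent with how $\overline{\calD}_{f,t}$ is used in (ii), (iii) and in (iv), where only $\essinf_{\overline{\calQ}_{\Theta}}=\essinf_{\widetilde{\calQ}_{\Theta}}$ is needed via Theorem \ref{lem:stableclosure}(iv)). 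With that reading your plan closes with no extra work; under the literal definition the claim for $\overline{\calQ}_{\Theta}$ would fail, so do not spend effort trying to prove that the two closures agree.
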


\begin{proof}[Proof of Theorem \ref{thm:general_prior_example}]
We first prove $(i)$. We first prove convexity and stability under pasting for the set of probability measures with Radon-Nikodym derivatives $D_T\in \widetilde{\calD}_{f,T}$ with respect to $\P$. 
We first prove convexity. 
Note that for a density process $(D_t)_{t=0}^T$ with $D_T\in\widetilde{\calD}_{f,T}$, 
\begin{align*}
\frac{D_{t+1}}{D_t}=\frac{\prod_{s=1}^{t+1}\int_{\Theta}f_s(\theta)\lambda_s(\rmd \theta)}{\prod_{s=1}^{t}\int_{\Theta}f_s(\theta)\lambda_s(\rmd \theta)}
=\int_{\Theta}f_{t+1}(\theta)\lambda_{t+1}(\rmd \theta).
\end{align*}
Consider density processes 
$D^{(1)},D^{(2)}$ with $D^{(1)}_T,D^{(2)}_T\in \widetilde{\calD}_{f,T}$, let $c\in (0,1)$ and set $D^{(3)}:= cD^{(1)}+(1-c)D^{(2)}$.
Then 
\begin{align*}
\frac{D^{(3)}_{t+1}}{D^{(3)}_t}
&=\frac{1}{D^{(3)}_t}\Big(cD^{(1)}_{t} \int_{\Theta}f_{t+1}(\theta)\lambda^{(1)}_{t+1}(\rmd \theta)+(1-c)D^{(2)}_{t} \int_{\Theta}f_{t+1}(\theta)\lambda^{(2)}_{t+1}(\rmd \theta)\Big)\\
&=\int_{\Theta}f_{t+1}(\theta)\bigg(\frac{1}{D^{(3)}_t}\Big(cD^{(1)}_t\lambda^{(1)}_{t+1}+(1-c)D^{(2)}_t\lambda^{(2)}_{t+1}\Big)(\rmd \theta)\bigg)
\end{align*}
and the convexity property follows since 
\begin{align*}
\frac{1}{D^{(3)}_{t}}\Big(cD^{(1)}_{t}\lambda^{(1)}_{t+1}+(1-c)D^{(2)}_{t}\lambda^{(2)}_{t+1}\Big)\in \Lambda_{t+1}.
\end{align*}
We now prove stability under pasting. 
Consider density processes 
$D^{(1)},D^{(2)}$ with $D^{(1)}_T,D^{(2)}_T\in \widetilde{\calD}_{f,T}$, and let $\tau$ be a stopping time. Then 
\begin{align*}
D^{(3)}_t&:=\prod_{s=1}^t\bigg(\indic\{s\leq \tau\}\frac{D^{(1)}_s}{D^{(1)}_{s-1}}+\indic\{s> \tau\}\frac{D^{(2)}_s}{D^{(2)}_{s-1}}\bigg)\\
&=\prod_{s=1}^t \int_{\Theta}f_s(\theta)\Big(\indic\{s\leq \tau\}\lambda^{(1)}_s+\indic\{s> \tau\}\lambda^{(2)}_s\Big)(\rmd \theta)
\end{align*}
and since $\indic\{s\leq \tau\},\indic\{s> \tau\}$ are $\calF_{s-1}$-measurable, 
\begin{align*}
\indic\{s\leq \tau\}\lambda^{(1)}_s+\indic\{s>\tau\}\lambda^{(2)}_s \in \Lambda_s,
\end{align*}
which proves stability under pasting. Theorem \ref{lem:stableclosure}$(i)$ completes the proof of $(i)$. 

Notice that $(ii)$ follows immediately from $(i)$ together with Theorem \ref{lem:stableclosure}$(ii)$. 
Similarly, $(iii)$ follows immediately from $(i)$ together with Theorem \ref{lem:stableclosure}$(iii)$.  

It remains to prove $(iv)$. 
The two last identities in $(iv)$ follow from the definition of $\calQ_{\Theta}$ and $\widetilde{\calQ}_{\Theta}$:
\begin{align*}
\essinf_{\Q \in \widetilde{\calQ}_{\Theta}}\E^\Q_t\big[Y_{t+1}\big]
&=\essinf_{\lambda_{t+1} \in \Lambda_{t+1}}\E^\P_t\bigg[Y_{t+1}\int_{\Theta}f_{t+1}(\theta)\lambda_{t+1}(\mathrm{d} \theta)\bigg]\\
&=\essinf_{\theta  \in \Theta} \E^\P_t \big[Y_{t+1} f_{t+1}\big(\theta\big)\big].
\end{align*}
The first identity follows from Theorem \ref{lem:stableclosure}$(iv)$. The proof is complete.    
\end{proof}

\section{Gaussian example}\label{sec:gaussian_example}

In this section we consider an application illustrating the general theory presented up to this point. 
We consider a setting where the liability cash flow is Gaussian, both under $\P$ and under any $\Q_{\theta}\in\calQ_{\Theta}$. 
We consider two cases:
\begin{itemize}
\item[Case 1]
In this case we assume that $\calQ=\calQ_{\Theta}$. Recall that $\calQ_{\Theta}$ is not stable under pasting: this decision maker does not consider probability measures corresponding to switching between probability measures in $\calQ_{\Theta}$ depending on information revealed over time. 
\item[Case 2]
In this case we assume that $\calQ=\widetilde{\calQ}_{\Theta}$. This decision maker exhibits a behaviour that is time consistent. Notice that $\widetilde{\calQ}_{\Theta}$ is considerably larger than $\calQ_{\Theta}\subset \widetilde{\calQ}_{\Theta}$. 
\end{itemize}

The liability cash flow is assumed to be fully nonhedgeable by financial assets and consequently we take $X^{r}=0$ which means that $X=X^{o}$. In order to make the illustration clear, we choose $T=2$. Let $C_{i,k}:=C_{i,k}^{\text{orig}}/v_i$ denote the exposure adjusted cumulative amount of payments to policyholders for accident year $i$, where $v_i$ is a known exposure measure for accident year $i$. 
The evolution of the exposure adjusted cumulative amounts is assumed to satisfy 
\begin{align*}
C_{i,1}=\beta_0^{\P}+\frac{\sigma_0^{\P}}{\sqrt{v_i}}\varepsilon_{i,1}, \quad 
C_{i,2}=\beta_1^{\P}C_{i,1}+\frac{\sigma_1^{\P}}{\sqrt{v_i}}\varepsilon_{i,2}, 
\end{align*}
where all $\varepsilon_{i,k}$ are independent and $N(0,1)$ with respect to $\P$. Suppose that we are uncertain about the parameter values and want to consider probability measures $\Q_{\theta}$, $\theta=(\beta_0,\sigma_0,\beta_1,\sigma_1)$, such that 
\begin{align*}
C_{i,1}=\beta_0+\frac{\sigma_0}{\sqrt{v_i}}\varepsilon_{i,1}^{\theta}, \quad
C_{i,2}=\beta_1C_{i,1}+\frac{\sigma_1}{\sqrt{v_i}}\varepsilon_{i,2}^{\theta}, 
\end{align*}
where all $\varepsilon_{i,k}^{\theta}$ are independent and $N(0,1)$ with respect to $\Q_{\theta}$. We choose a parameter set $\Theta\subset (0,\infty)^{4}$ that describes the uncertainty about the parameter values. 

Suppose that $C_{i,k}$ with $i+k\leq 0$ are observed at time $0$ and that $C_{i,k}$ with $i+k=t$, $t=1,2$, are observed at times $t>0$ and therefore contain cash flows that are part of the outstanding liability to the policyholders. 
The (incremental) liability cash flow $X=(X_1,X_2)$ is given by 
\begin{align*}
X=\bigg(v_{-1}(C_{-1,2}-C_{-1,1})+v_0C_{0,1},v_0(C_{0,2}-C_{0,1})\bigg).
\end{align*}
Notice that $C_{-1,1}$ is here considered to be a known constant. 
Direct computations give  
\begin{align*}
\E^{\P}[X_1+X_2]&=v_{-1}(\beta_1^{\P}-1)C_{-1,1}+v_0\beta_0^{\P}\beta_1^{\P}, \\ 
\E^{\Q_{\theta}}[X_1+X_2]&=v_{-1}(\beta_1-1)C_{-1,1}+v_0\beta_0\beta_1.
\end{align*}
The filtration is given by the $\sigma$-algebras $\calF_0=\{\emptyset,\Omega\}$, $\calF_1=\sigma(\varepsilon_{-1,2},\varepsilon_{0,1})$ and $\calF_2=\sigma(\varepsilon_{0,2}) \vee \calF_1$. 
In order to have the correct evolution with respect to $\Q_{\theta}$ of the cumulative amounts it is seen that we must require that 
\begin{align*}
&\Q_{\theta}(\varepsilon_{-1,2}\in \cdot\mid \calF_{0})\sim N(\mu_{-1,2},\sigma_{-1,2}^2),\quad 
\mu_{-1,2}=\frac{\beta_1-\beta_1^{\P}}{\sigma_{1}^{\P}/\sqrt{v_{-1}}}C_{-1,1}, \,
\sigma_{-1,2}=\frac{\sigma_{1}}{\sigma_{1}^{\P}}, \\
&\Q_{\theta}(\varepsilon_{0,1}\in \cdot\mid \calF_{0})\sim N(\mu_{0,1},\sigma_{0,1}^2),\quad 
\mu_{0,1}=\frac{\beta_0-\beta_0^{\P}}{\sigma_{0}^{\P}/\sqrt{v_0}}, \,
\sigma_{0,1}=\frac{\sigma_{0}}{\sigma_{0}^{\P}}, \\
&\Q_{\theta}(\varepsilon_{0,2}\in \cdot\mid \calF_{1})\sim N(\mu_{0,2},\sigma_{0,2}^2),\quad 
\mu_{0,2}=\frac{\beta_1-\beta_1^{\P}}{\sigma_{1}^{\P}/\sqrt{v_0}}C_{0,1}, \,
\sigma_{0,2}=\frac{\sigma_{1}}{\sigma_{1}^{\P}}.
\end{align*}
This corresponds to, in the setting of Section \ref{sec:Q}, choosing 
\begin{align}
f_1(\theta)&=\frac{\phi(\varepsilon_{-1,2};\mu_{-1,2},\sigma_{-1,2}^2)\phi(\varepsilon_{0,1};\mu_{0,1},\sigma_{0,1}^2)}{\phi(\varepsilon_{-1,2};0,1)\phi(\varepsilon_{0,1};0,1)}, \label{eq:f1}\\
f_2(\theta)&=\frac{\phi(\varepsilon_{0,2};\mu_{0,2},\sigma_{0,2}^2)}{\phi(\varepsilon_{0,2};0,1)}, \nonumber
\end{align}
where $\varphi(x;\mu,\sigma^2)$ denotes the density function of $N(\mu,\sigma^2)$. 
By Remark \ref{rem:ui}, the set $\widetilde{\calD}_{f,2}$ is $\P$-uniformly integrable if 
$$
\lim_{\P(A)\to 0}\sup_{\Q\in \widetilde{\calQ}_{\Theta}}\Q(A)=0
$$
which holds here since $\sigma_k^{\P}/\sigma_k$ and $|\beta_k-\beta_k^{\P}|$ both take values in bounded intervals bounded away from $0$. The sets $A\in\calF_2$ are of type $\{(\varepsilon_{-1,2},\varepsilon_{0,1},\varepsilon_{0,2})\in B\}$ for measurable sets $B\subset \R^3$ such that $\P((\varepsilon_{-1,2},\varepsilon_{0,1},\varepsilon_{0,2})\in B)\to 0$. 
Therefore, it follows from Theorem \ref{thm:general_prior_example} that 
the set $\widetilde{\calQ}_{\Theta}$ in Definition \ref{def:curlyQs} satisfies the requirements for multiple prior optimal stopping. In particular, Theorem \ref{Vt-def-thm} holds with $\calQ=\widetilde{\calQ}_{\Theta}$.  

$\Theta$ can be chosen to reflect parameter uncertainty. 
To illustrate how such a choice may be implemented, consider the regression estimators from Lindholm et al.\cite{lindholm2017valuation} based on data from accident years $i=i_0,\dots,-1$:
\begin{align*}
\widehat{\beta_0^{\P}}&=\frac{\sum_{i=i_0}^{-1}v_iC_{i,1}}{\sum_{i=i_0}^{-1}v_i}, \quad 
\widehat{(\sigma_0^{\P})^2}=\frac{1}{-i_0-1}\sum_{i=i_0}^{-1}v_i(C_{i,1}-\widehat{\beta_0^{\P}})^2, \\
\widehat{\beta_1^{\P}}&=\frac{\sum_{i=i_0}^{-2}v_iC_{i,1}C_{i,2}}{\sum_{i=i_0}^{-2}v_iC_{i,1}^2}, \quad 
\widehat{(\sigma_1^{\P})^2}=\frac{1}{-i_0-2}\sum_{i=i_0}^{-2}v_i(C_{i,2}-\widehat{\beta_1^{\P}}C_{i,1})^2.
\end{align*}
Here $i_0$ denotes the index of the first accident year observed. These estimators are unbiased and uncorrelated. We now proceed with a numerical illustration, with parameter values $(\beta_0^{\P}, \sigma_0^{\P}, \beta_1^{\P}, \sigma_1^{\P})= (2/3,1/5,3/2,1/5)$, $i_0=-10$, and $v_i=1$ for $i=-10,\dots,0$. Based on these parameter values and a large number $n$ of simulated independent standard normal $\varepsilon_{i,j}$, leading to  $n$ iid copies of $C_{-10,1},\dots,C_{-1,1},C_{-10,2},\dots,C_{-2,2}$, we estimate $(\beta_0^{\P}, \sigma_0^{\P}, \beta_1^{\P}, \sigma_1^{\P})$ $n$ times. 
Figure \ref{fig:scatter_parameterestimates} presents scatter plots, which suggests that the iid vectors of estimators are approximately $N_{4}(\mu,\Sigma)$-distributed, where $\mu$ and $\Sigma$ are the sample mean and sample covariance matrix. We can therefore shape an approximative  confidence region with confidence level $p$ of the parameter values by the squared Mahalanobis distance as
\begin{align*}
\Theta&=\Big\{z\in\R^4:(z-\mu)^{\trans}\Sigma^{-1}(z-\mu)\leq F_{\chi^2(4)}^{-1}(p)\Big\}\\
&=\Big\{\mu+rLs\in\R^4:r^2\leq F_{\chi^2(4)}^{-1}(p),\,s\in\mathbb{S}^3\Big\},
\end{align*}  
where $L$ is the (lower triangular) Cholesky decomposition of $LL^{\trans}=\Sigma$, $F_{\chi^2(4)}$ is the distribution function of the $\chi^2(4)$ and $\mathbb{S}^3$ is the unit sphere in $\R^4$. For the evaluation at time 1, only $(\beta_1, \sigma_1)$ needs to be considered, leading to a set $\Theta_{\beta_1,\sigma_1}\subset \R^2$ satisfying that 
$$
\{(0,0,\beta_1,\sigma_1):(\beta_1,\sigma_1)\in \Theta_{\beta_1,\sigma_1}\}
$$ 
is the orthogonal projection of $\Theta$ onto the $(\beta_1, \sigma_1)$ coordinate plane: $\beta_0=\sigma_0=0$. Explicitly, 
\begin{align*}
\Theta_{\beta_1,\sigma_1}&=\Big\{z\in\R^2:(z-\mu_{\beta_1,\sigma_1})^{\trans}\Sigma_{\beta_1,\sigma_1}^{-1}(z-\mu_{\beta_1,\sigma_1})\leq F_{\chi^2(4)}^{-1}(p)\Big\}\\
&=\Big\{\mu_{\beta_1,\sigma_1}+rL_{\beta_1,\sigma_1}s\in\R^2:r^2\leq F_{\chi^2(4)}^{-1}(p),\,s\in\mathbb{S}^1\Big\},
\end{align*}
where $\mathbb{S}^1$ is the unit sphere in $\R^2$, $\mu_{\beta_1,\sigma_1}$ is the subvector of the last two entries of $\mu$ and $L_{\beta_1,\sigma_1}$ is the Cholesky decomposition of the submatrix $\Sigma_{\beta_1,\sigma_1}$ of $\Sigma$. 
Similarly, to compute the upper bound of $V_0$ in \eqref{eq:V0_bound_gauss}, only $(\beta_0, \beta_1)$ need to be considered, leading to a similar set $\Theta_{\beta_0,\beta_1}\subset \R^2$.  

The left plot in Figure \ref{fig:scatter_parameterestimates} shows a scatted plot of $1000$ iid estimates of $(\beta_0^{\P},\beta_1^{\P})$ together with boundaries $\partial \Theta_{\beta_0,\beta_1}$ for $p=0.1$ (blue) and for $p=0.9$ (red). 
 The right plot in Figure \ref{fig:scatter_parameterestimates} shows a scatted plot of $1000$ iid estimates of $(\beta_1^{\P},\sigma_1^{\P})$ together with boundaries $\partial \Theta_{\beta_1,\sigma_1}$ for $p=0.1$ (blue) and for $p=0.9$ (red). 

\begin{figure}[ht]\center
\includegraphics[scale=0.30]{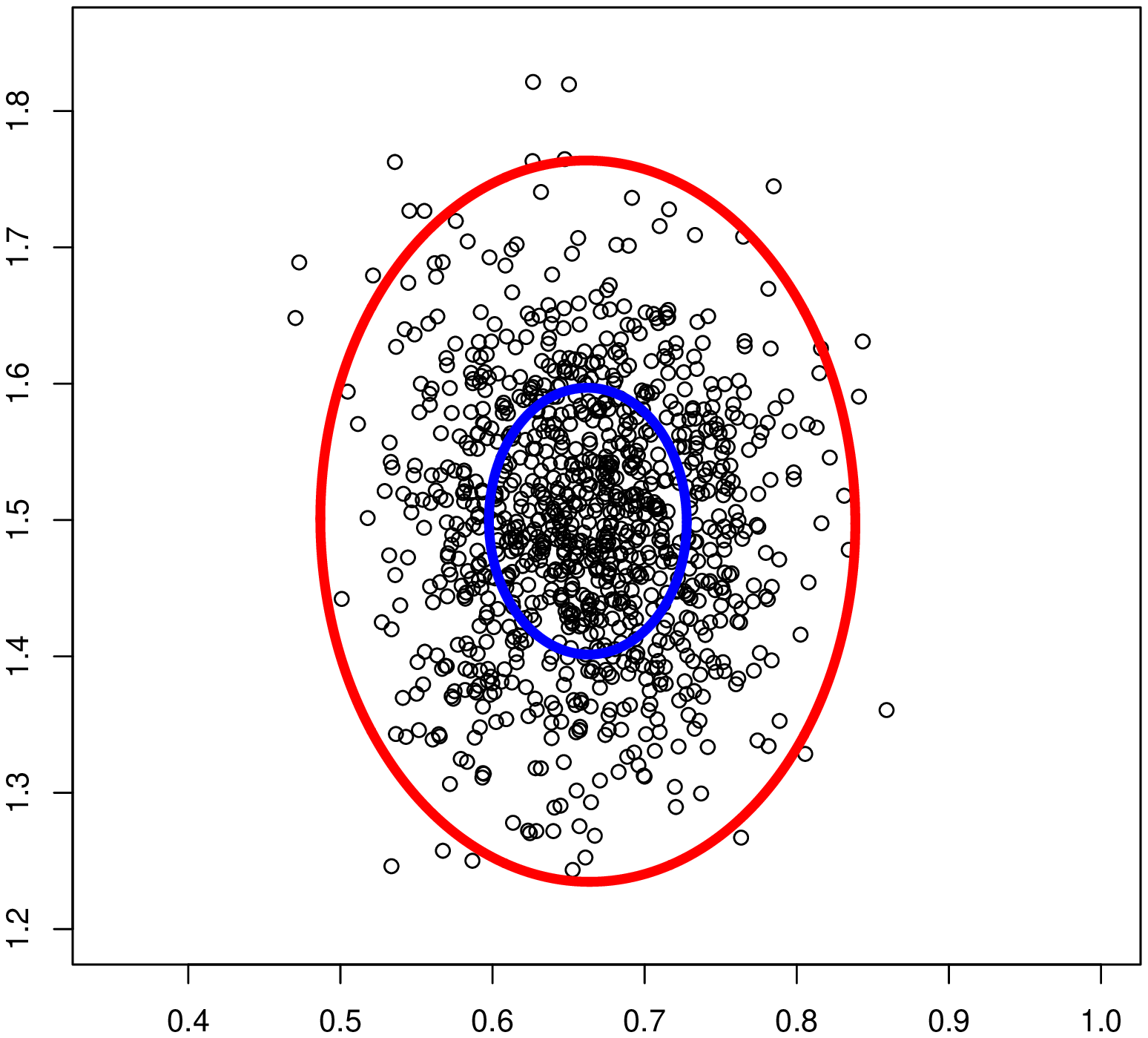}
\includegraphics[scale=0.30]{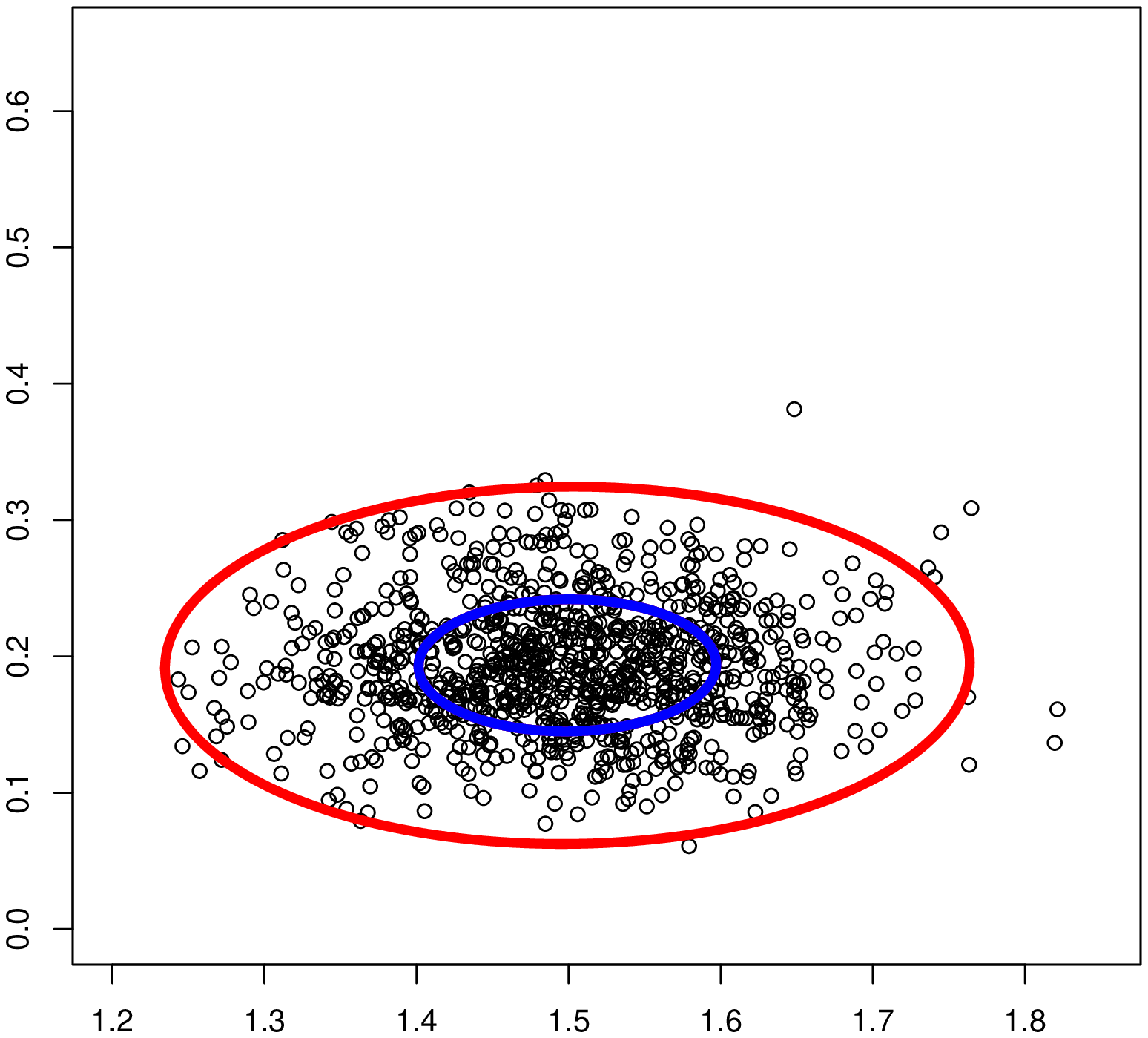}
\caption{Scatter plots of $1000$ iid estimates of $(\beta_0^{\P},\beta_1^{\P})$ (left) and of $(\beta_1^{\P},\sigma_1^{\P})$ (right), together with boundaries of the parameter regions $\Theta_{\beta_0,\beta_1}$ (left) and $\Theta_{\beta_1,\sigma_1}$ (right) for $p=0.1$ (blue) and for $p=0.9$ (red).}
\label{fig:scatter_parameterestimates}
\end{figure}

Let $\rho_0,\rho_1$ be conditional monetary risk measures defined in terms of conditional quantiles with respect to $\P$, such as, for $t=0,1$, $\rho_t=\VaR_{t,p}$ or $\rho_t=\AVaR_{t,p}$. In both cases, $c:=\rho_0(e^{\P}_{1})=\rho_1(e^\P_{2})$ is a constant for an $\calF_{t+1}$-measurable $e^{\P}_{t}\sim N(0,1)$ and independent of $\calF_t$ with respect to $\P$.   
Then 
\begin{align*}
R_1&=\rho_1(-X_2)
=\rho_1(-\E^\P_1[X_{2}]+\Var^\P_1(X_{2})^{1/2}e^{\P}_{2})
=\E^\P_1[X_{2}]+\Var^\P_1(X_{2})^{1/2}c\\
&=v_0(\beta_1^{\P}-1)C_{0,1}+\sqrt{v_0}\sigma_1^{\P}c.
\end{align*}

\subsection{Case 1: computing upper and lower bounds for $V_0$}

In this case $\calQ=\calQ_{\Theta}$ does not satisfy the conditions of Theorem \ref{Vt-def-thm} and therefore we can not compute $V_0$ by backward recursion. However, upper and lower bounds for $V_0$ are easily computed. 
From \eqref{eq:V0generalupperbound} we have the upper bound 
\begin{align}
V_0&\leq \sup_{\Q\in\calQ_{\Theta}}\E^{\Q}[X_1+X_2] \nonumber \\
&=\sup\Big\{v_{-1}(\beta_1-1)C_{-1,1}+v_0\beta_0\beta_1:(\beta_0,\sigma_0,\beta_1,\sigma_1)\in\Theta\Big\}=:\overline{V}_0. \label{eq:V0_bound_gauss}
\end{align}
From \eqref{eq:V0generallowerbound} we have the lower bound 
\begin{align*}
V_0&\geq \sup_{\Q \in \calQ_{\Theta}}\inf_{\tau \in \mathcal{S}_{1,T+1}}\E_0^{\Q}\bigg[\sum_{s=1}^{\tau-1}X_s+R_{\tau-1}\bigg] 
=: \underline{V}_0.
\end{align*}
In the setting of Section \ref{sec:Q}, for each $\theta\in\Theta$, with $V_T^{\theta}=R_T^{\theta}=0$, we solve the backward recursion 
\begin{align*}
R_t^{\theta}&=\rho_t(-X_{t+1}-V_{t+1}^{\theta}), \\
V_t^{\theta}&=R_{t}^{\theta}-\E^{\Q_{\theta}}_{t}[(R_{t}^{\theta}-X_{t+1}-V_{t+1}^{\theta})^{+}], 
\end{align*}
and then compute  
\begin{align*}
\underline{V}_0=\sup_{\theta \in \Theta}V_0^{\theta}. 
\end{align*}
Notice that $V_t^{\theta},R_t^{\theta}$ corresponds to the quantities $V_t,R_t$ in the special case $\calQ=\{\Q_{\theta}\}$. 
Computing $\underline{V}_0$ is simpler than computing $V_0$ since the former involves just one optimisation over the parameter set $\Theta$ rather than $T$ nested optimisations for the latter. 

We now demonstrate how $\underline{V}_0$ is computed in the current Gaussian setting. As shown above $R_1^{\theta}=v_0(\beta_1^{\P}-1)C_{0,1}+\sqrt{v_0}\sigma_1^{\P}c$ (which does not depend on $\theta$) and 
\begin{align*}
C_1^{\theta}=\E^{\Q_{\theta}}_1\big[\big(\rho_1(-X_{2})-X_{2}\big)^+\big]
=\E^{\Q_{\theta}}_1[\big(a(\theta,C_{0,1})-b(\theta)e^{\theta}_{2}\big)^+], 
\end{align*}
where $e^{\theta}_{2}\sim N(0,1)$ with respect to $\Q_{\theta}$ and independent of $\calF_1$, and 
\begin{align*}
a(\theta,C_{0,1})&=-\E^{\Q_{\theta}}_1[X_{2}]+\E^\P_1[X_{2}]+\Var^\P_1(X_{2})^{1/2} \rho_1(e^\P_{2})\\
&=v_0(\beta_1^{\P}-\beta_1)C_{0,1}+\sqrt{v_0}\sigma_1^{\P}c, \\
b(\theta)&=\Var^{\Q_{\theta}}_1(X_{2})^{1/2} = \sqrt{v_0}\sigma_1. 
\end{align*} 
Straightforward calculations show that 
\begin{align*}
\E^{\Q_{\theta}}_1[\big(a(\theta,C_{0,1})-b(\theta)e^{\theta}_{2}\big)^+]
&=a(\theta,C_{0,1})\Phi\bigg(\frac{a(\theta,C_{0,1})}{b(\theta)}\bigg)+b(\theta)\phi\bigg(\frac{a(\theta,C_{0,1})}{b(\theta)}\bigg)\\
&=:g(\theta,C_{0,1}).
\end{align*}
Consequently, 
\begin{align*}
X_1+V_1^{\theta}&=X_1+R_1^{\theta}-C_1^{\theta}\\
&=v_{-1}(\beta_1-1)C_{-1,1}+\sqrt{v_{-1}}\sigma_1\varepsilon_{-1,2}^{\theta}+\sqrt{v_0}\sigma_1^{\P}c\\
&\quad+\beta_1^{\P}\bigg(v_0\beta_0+\sqrt{v_0}\sigma_1\varepsilon_{0,1}^{\theta}\bigg)
-g\bigg(\theta,v_0\beta_0+\sqrt{v_0}\sigma_1\varepsilon_{0,1}^{\theta}\bigg)\\
&=v_{-1}(\beta_1^{\P}-1)C_{-1,1}+\sqrt{v_{-1}}\sigma_1^{\P}\varepsilon_{-1,2}+\sqrt{v_0}\sigma_1^{\P}c\\
&\quad+\beta_1^{\P}\bigg(v_0\beta_0^{\P}+\sqrt{v_0}\sigma_1^{\P}\varepsilon_{0,1}\bigg)
-g\bigg(\theta,v_0\beta_0^{\P}+\sqrt{v_0}\sigma_1^{\P}\varepsilon_{0,1}\bigg)
\end{align*}
from which $R_0^{\theta}=\rho_0(-X_1-V_1^{\theta})$ can be estimated with arbitrary accuracy by simulating iid copies of $X_1+V_1^{\theta}$ with respect to $\P$ and computing the empirical estimate, and $C_0^{\theta}=\E^{\Q_{\theta}}[(R_0^{\theta}-X_1-V_1^{\theta})^{+}]$ can be estimated similarly by simulating iid copies with respect to $\Q_{\theta}$ and approximating the expectation by the empirical mean. Finally, 
\begin{align*}
\underline{V}_0=\sup_{\theta\in \Theta}\big(R_0^{\theta}-C_0^{\theta}\big)
=\sup_{\theta\in \partial \Theta}\big(R_0^{\theta}-C_0^{\theta}\big).
\end{align*}
Table \ref{tab:V0lowerupperbounds} shows numerical values for lower bounds $\underline{V}_0$ and for upper bounds $\overline{V}_0$. These values are based on 
$v_{-1}=v_0=1$, $C_{-1,1}=\beta_0^{\P}$, 
$\rho_t=\VaR_{t,q}$ with $q=0.005,0.01,0.05,0.10$ and parameters sets $\Theta$ of varying size corresponding to $r^2\leq F_{\chi^2(4)}^{-1}(p)$ with $p=0.1,0.5,0.9$. The main message of Table \ref{tab:V0lowerupperbounds} is that the intervals $(\underline{V}_0,\overline{V}_0)$ are very narrow for $q$ small and therefore the upper bound $\overline{V}_0$ is an accurate estimate of $V_0$ when $q$ is small. Notice that the upper bound is both easily computed and has attractive theoretical properties.  

\subsection{Case 2: computing $V_0$ and an upper bound for $V_0$}

In this case $\calQ=\widetilde{\calQ}_{\Theta}$ and the general lower bound $\underline{V}_0$ coincides with $V_0$ and therefore its computation by backward recursion is somewhat involved. However, the upper bound is still fairly straightforward to compute. Notice that the lower bound computed for Case 1 is a lower bound for $V_0$ in the current Case 2 since $\calQ_{\Theta}\subset \widetilde{\calQ}_{\Theta}$. 

We begin by computing the upper bound using the law of iterated expectations, extended to the multiple prior setting, and Theorem \ref{thm:general_prior_example}: 
\begin{align*}
\overline{V}_0&=\sup_{\Q\in\widetilde{\calQ}_{\Theta}}\E^{\Q}_0[X_1+X_2]\\
&=\sup_{\Q\in\widetilde{\calQ}_{\Theta}}\E^{\Q}_0[X_1+\esssup_{\Q'\in\widetilde{\calQ}_{\Theta}}\E^{\Q'}_1[X_2]]\\
&=\sup_{\Q\in\calQ_{\Theta}}\E^{\Q}_0[X_1+\esssup_{\Q'\in\calQ_{\Theta}}\E^{\Q'}_1[X_2]].
\end{align*}
Notice that, with $\beta_{1,\min}>1$,  
\begin{align*}
\esssup_{\Q'\in\calQ_{\Theta}}\E^{\Q'}_1[X_2]
&=v_0(\beta_{1,\max}-1)C_{0,1}\indic_{\{C_{0,1}\geq 0\}}+v_0(\beta_{1,\min}-1)C_{0,1}\indic_{\{C_{0,1}<0\}},
\end{align*}
where $\beta_{1,\max}:=\max\{\beta_1:(\beta_0,\sigma_0,\beta_1,\sigma_1)\in \Theta\}$ and similarly for $\beta_{1,\min}$. 
Therefore, 
\begin{align*}
\overline{V}_0
=\sup_{(\beta_0,\sigma_0,\beta_1,\sigma_1)\in \Theta}\Big(&v_{-1}(\beta_1-1)C_{-1,1}+v_0\beta_0\\
&+v_0(\beta_{1,\max}-1)\big(\beta_0+\sigma_0\Phi(-\beta_0/\sigma_0)\big)\\
&-v_0(\beta_{1,\min}-1)\sigma_0\Phi(-\beta_0/\sigma_0)\Big)
\end{align*}
$R_1$ is calculated explicitly as above.   
Computing $C_1$ means computing 
$$
C_1=\esssup_{\theta_1\in \partial \Theta_{\beta_1,\sigma_1}}g(\theta_1,C_{0,1}),
$$
where, with some abuse of notation, we consider $g$ to be defined for parameters $\theta_1\in \Theta_{\beta_1,\sigma_1}$ 
rather than $\theta\in \Theta$. In practice, this means determining a function $h:\R\to\R$ such that 
$h(C^k_{0,1})=\max_{\theta_1\in \partial \Theta_{\beta_1,\sigma_1}}g(\theta_1,C^k_{0,1})$ for suitably many simulated iid copies $C^1_{0,1},\dots,C^n_{0,1}$ of $C_{0,1}$ and approximating $C_1\approx h(C_{0,1})$. 
Given the choice of $h$, $R_0=\rho_0(-X_1-R_1+C_1)$ is approximated by its empirical estimate based on simulated iid copies with respect to $\P$ of 
\begin{align*}
&v_{-1}(\beta_1^{\P}-1)C_{-1,1}+\sqrt{v_{-1}}\sigma_1^{\P}\varepsilon_{-1,2}+\sqrt{v_0}\sigma_1^{\P}c\\
&+\beta_1^{\P}\bigg(v_0\beta_0^{\P}+\sqrt{v_0}\sigma_1^{\P}\varepsilon_{0,1}\bigg)
-h\bigg(v_0\beta_0^{\P}+\sqrt{v_0}\sigma_1^{\P}\varepsilon_{0,1}\bigg)
\end{align*}
Similarly, $C_0$ is approximated by, for each $\theta$ in a dense subset of $\partial \Theta$, simulating iid copies with respect to $\Q_{\theta}$ of 
\begin{align*}
&v_{-1}(\beta_1-1)C_{-1,1}+\sqrt{v_{-1}}\sigma_1\varepsilon_{-1,2}^{\theta}+\sqrt{v_0}\sigma_1^{\P}c\\
&+\beta_1^{\P}\bigg(v_0\beta_0+\sqrt{v_0}\sigma_1\varepsilon_{0,1}^{\theta}\bigg)
-h\bigg(v_0\beta_0+\sqrt{v_0}\sigma_1\varepsilon_{0,1}^{\theta}\bigg),
\end{align*} 
estimating $\E^{\Q_{\theta}}[(R_0-X_1-R_1+C_1)^{+}]$ by the empirical mean, and computing the minimum of these expectations over the $\theta$ values. Finally, $V_0$ is estimated by the difference of the estimates of $R_0$ and $C_0$.   

Table \ref{tab:V0lowerupperbounds} shows numerical values for lower bounds $\underline{V}_0$ and for upper bounds $\overline{V}_0$ with the same parameter values as those considered for Case 1.  
Similarly to Case 1, the intervals $(\underline{V}_0,\overline{V}_0)$ are very narrow for $q$ small and therefore the upper bound $\overline{V}_0$ is an accurate estimate of $V_0$ when $q$ is small. 

\begin{table}[!ht]
\[\begin{array}{|l|ccc|}\hline
\multicolumn{4}{|c|}{\text{Case 1}} \\ \hline
& p=0.1 & p=0.5 & p=0.9 \\ \hline
q=0.10 & (1.452,1.491) & (1.562,1.624) & (1.686,1.787) \\
q=0.05 & (1.473,1.491) & (1.592,1.624) & (1.730,1.787) \\
q=0.01 & (1.490,1.491) & (1.618,1.624) & (1.772,1.787) \\
q=0.005 & (1.491,1.491) & (1.622,1.624) & (1.780,1.787) \\  \hline 
\multicolumn{4}{|c|}{\text{Case 2}} \\ \hline
& p=0.1 & p=0.5 & p=0.9 \\ \hline
q=0.10 & (1.470,1.513) & (1.595,1.666) & (1.734,1.856) \\
q=0.05 & (1.491,1.513) & (1.628,1.666) & (1.786,1.856) \\
q=0.01 & (1.509,1.513) & (1.656,1.666) & (1.835,1.856) \\
q=0.005 & (1.511,1.513) & (1.661,1.666) & (1.845,1.856) \\  \hline 
\end{array} \]
\caption{Case 1 and Case 2: lower and upper bounds $(\underline{V}_0,\overline{V}_0)$ rounded to three decimals, where the size of the parameter uncertainty region is determined by $r^2\leq F_{\chi^2(4)}^{-1}(p)$ and $\rho_t=\VaR_{t,q}$. Empirical estimates were based on iid samples of size $10^5$.}
\label{tab:V0lowerupperbounds}
\end{table}

\end{document}